%
\documentclass[11pt]{article}
\usepackage{hyperref}
\usepackage{tikz}

\usepackage{tikz-cd}

\textwidth 15.3cm
\oddsidemargin 0in
\evensidemargin 0in
\textheight 22.3cm
\topmargin 0in
\headsep 0in

\usepackage{amsmath,amsthm,amsfonts,amssymb,amsopn,amscd,bm} 
\usepackage{color}

\def\b{\beta}

\def\e{\varepsilon}
\def\g{\gamma}

\def\l{\lambda}

\def\setminus{\smallsetminus}

\def\A{{\cal A}}
\def\B{{\cal B}}

\def\C{{\cal C}}
\def\D{{\cal D}}

\def\M{{\cal M}}

\def\R{{\cal R}}

\def\H{{\cal H}}

\def\S{{\cal S}}

\def\f{{\varphi}}
\def\s{{\sigma}}

\def\l{{\lambda}}
\def\x{{{h}}}

\def\PSL{{{\rm PSL}(2,\mathbb R)}}

\def\S2{S^{1(2)}}

\def\RR{\mathbb R}

\newtheorem{theorem}{Theorem}[section]
\newtheorem{lemma}[theorem]{Lemma}

\newtheorem{corollary}[theorem]{Corollary}

\newtheorem{proposition}[theorem]{Proposition}

\newtheorem{identity}[theorem]{Identity}

\theoremstyle{remark} \newtheorem{remark}[theorem]{Remark}

\newcommand{\ben}{\begin{equation}}
\newcommand{\een}{\end{equation}}

\def\setminus{\smallsetminus}

\def\PSL{PSU(1,1)}

\def\Z{{\mathbb Z}}

\def\SL2{{{\rm SL}(2,\R)}}

\def\PSL2{{{\rm PSL}(2,\Reali)}}

\def\U1{{{\rm V}(1)}}
\def\SU2{{{\rm SV}(2)}}

\def\SU{{{\rm SU}}}

\def\A{{\mathcal A}}
\def\B{{\mathcal B}}
\def\C{{\mathcal C}}
\def\D{{\mathcal D}}

\def\H{{\mathcal H}}

\def\M{{\mathcal M}}

\def\S{{\mathcal S}}
\def\T{{\mathcal T}}

\def\bx{{\bf x}}

\def\bp{{\bf p}}


\def\cD{{\cal D}}

\def\cS{{\cal S}}


\def\bC{{\mathbb C}}

\def\bR{{\mathbb R}}
\def\RR{{\mathbb R}}


\def\b{\beta}
\def\g{\gamma}


\def\l{\lambda}

\def\x{\xi}


\def\s{\sigma}

\def\f{\varphi}



%


%


\def\ov{\overline}



\parskip1mm

\title{\Huge{The massless modular Hamiltonian
}}

\author{{\sc Roberto Longo\thanks{Supported by the ERC Advanced Grant 669240 QUEST ``Quantum Algebraic Structures and Models'', MIUR FARE R16X5RB55W  QUEST-NET and GNAMPA-INdAM. \eject
E-mail: {longo@mat.uniroma2.it, morsella@mat.uniroma2.it}
}
}, {\sc Gerardo Morsella}\\
Dipartimento di Matematica,
Universit\`a di Roma Tor Vergata,\\
Via della Ricerca Scientifica, 1, I-00133 Roma, Italy
}

\date{}
\begin{document}

\maketitle

\begin{abstract}
We compute the vacuum local modular Hamiltonian associated with a space ball region in the free scalar massless Quantum Field Theory. We give an explicit expression on the one particle Hilbert space in terms of the higher dimensional Legendre differential operator. The quadratic form of the massless modular Hamiltonian is expressed in terms of an integral of the energy density with the parabolic distribution. We then get the formula for the local entropy of a wave packet. This gives the vacuum relative entropy of a coherent state on the double cone von Neumann algebras associated with the free scalar QFT. 
Among other points, we provide the passivity characterisation of the modular Hamiltonian within the standard subspace setup. 
\end{abstract}

\newpage

\section{Introduction}
In this paper, we provide the formula for the vacuum modular Hamiltonian associated with the free massless scalar quantum field in a bounded spacetime region (double cone). 

\smallskip
\noindent
{\it Background for the modular Hamiltonian.} Let $\M$ be a von Neumann algebra and $\f$ a faithful normal state on $\M$. As is well known, the Tomita-Takesaki modular theory provides us with a canonical one-parameter automorphism group of $\M$ associated with $\f$, the modular group $\s^\f$. Thus, the quantum system $\M$ is equipped with an intrinsic evolution $\s^\f$, that is characterised by the KMS thermal equilibrium condition \cite{Tak}.
In the GNS representation, the modular group is implemented by a unitary one-parameter group $\Delta_\f^{is}$, whose generator $\log\Delta_\f$ is the modular Hamiltonian associated with $\f$ (see \cite{Lexp} for a discussion on the matter). 

In Quantum Field Theory, for each spacetime region $O$, we have the von Neumann algebra $\A(O)$ of observables localised in $O$. By the Reeh-Schlieder theorem, the vacuum vector is cyclic and separating for $\A(O)$ if both $O$ and its causal complement have non empty interiors, see \cite{Haag}. Thus the restriction of vacuum state $\f$ to $\A(O)$ is faithful; the modular Hamiltonian associated with $\A(O)$ and $\f$ gives us a local Hamiltonian $\log \Delta_O$. 

The problem of computing  $\log \Delta_O$ is then natural. Among other motivations, the modular Hamiltonian is related to Araki's relative entropy \cite{Ar76}, that is recently playing an important role in Quantum Field Theory, also in relation with quantum entropy/energy inequalities; an account of this wide research topic goes beyond the purpose of this introduction (see \cite{ABCH, CF, L18, LX18, LXbek, Wit} and refs. therein).

If $W$ is a wedge region, there is an important model independent result  \cite{BW}: $\Delta_W^{-is}$ is identified with the $2\pi$-rescaled boost unitary transformations leaving $W$ globally invariant, a result rich of consequences, see \cite{Haag}. 
For a different spacetime region $O$, the understanding of the local modular structure is definitely more problematic because, in general, $\Delta_O^{is}$ has no geometric description due to the lack of enough spacetime symmetries. A basic issue concerns the case where $O$ is a double cone, the causal envelop of a time-zero ball.
In the free massless QFT, the geometric description of the double cone vacuum modular group was derived in \cite{HL}. The general conformal case was later analysed solely in terms of the local von Neumann algebras \cite{FGa, BGL}. 
Among other results, we mention here the forward light cone case in the free massless QFT \cite{B77} and the approximate local estimates in \cite{F85}. 

The modular theory has a version for standard subspaces, see \cite{LRT, RvD, LN}. Let $\H$ be a complex Hilbert space and $H\subset \H$ a standard subspace, i.e. $H$ is a real linear, closed subspace of $\H$ such that $H\cap iH' = \{0\}$ and ${\ov{H + iH}} = \H$, with $H'$ the symplectic complement of $H$. The modular operator $\Delta_H$ associated with $H$ is a canonical positive, non-singular selfadjoint operator on $\H$ associated with $H$ that satisfies 
\[
\Delta^{is}_H H = H\, , \quad s\in \mathbb R\, .
\]
In a free scalar QFT, the vacuum modular (unitary) group associated with the von Neumann algebra of a region $O$ is the second quantisation of a unitary one-parameter group in the one-particle Hilbert space, indeed of the modular group of the local standard subspace $H(O)$. We shall henceforth denote by $\Delta_O$ the modular operator $\Delta_{H(O)}$ on the one particle Hilbert space. Also, we set $H(B) = H(O)$, $\Delta_B= \Delta_O$ if $O$ is  the double cone with basis the unit space ball centred at origin $B$. Hence, in this paper
\[
\text{Modular Hamiltonian of $B$} = \log\Delta_B\, .
\]
In the massless case,  $\Delta_B^{is}$ is associated with a one parameter group of conformal transformations that globally preserve $O$ \cite{HL}. Nonetheless, the description of $\log \Delta_B$ has not been worked out so far. We shall assume that the spatial dimention $d$ is greater than one. The case $d=1$ is essentially the same on a natural subspace of test functions, see \cite[Thm. 5.9]{L21}. 

In terms of the wave Cauchy data, we shall see that the local massless modular Hamiltonian is given by
\ben\label{Id}
\log\Delta_B = -2\pi i\left[\begin{matrix} 0 & \frac12(1-r^2) \\
\frac12 (1-r^2)\nabla^2 - r \partial_r - D&0\end{matrix}\right]\, ,
\een
with $D = (d-1)/2$ the scaling dimension of the free scalar field. Namely
\ben
\log\Delta_B = -2\pi i\left[\begin{matrix} 0 & M \\
L&0\end{matrix}\right]\, ,
\een
with
\begin{align}\label{LM}
&M =\ \text{Multiplication operator by}\ \,\frac12(1-r^2) \, ,\\
&L =\ \text{Legendre operator}\ \,\frac12 (1-r^2)\nabla^2 - r \partial_r - D\, .
\end{align}
As we shall see,
the right-hand side of \eqref{Id} gives indeed an essentially selfadjoint operator on the one particle Hilbert space on the smooth, compactly supported function domain. 

We shall derive the following formula for the massless modular Hamiltonian of $B$ in terms of the classical stress-energy tensor $T$:
\ben\label{lH}
-(\Phi, \log \Delta_{B} \Phi) 
= 2\pi\int_{x_0 = 0} \frac{1 - r^2}{2} \langle T_{00} \rangle_{\Phi}(x) dx  + \pi D \int_{x_0 = 0} \Phi^2 dx \, ,
\een
with $\Phi$ a real wave with smooth, compactly supported Cauchy data. Here $T_{00}$ is energy density
\ben\label{ed}
T_{00} = \frac12 \big( (\partial_0 \Phi)^2  + |\nabla_{\bx} \Phi|^2    \big)\ .
\een

The right-hand side of \eqref{lH} is similar to a formula for the modular Hamiltonian sketched by Casini, Huerta and Myers \cite[(2.23)]{CHM} in terms of the QFT stress-energy tensor, when the Cauchy data are supported in $B$. 

\smallskip
\noindent
{\it The measure of information.}
One main consequence of our analysis is a formula for the entropy density carried by a wave packet. Let's explain the framework. 

Suppose $\Phi$ is a scalar wave packet, a solution of the wave equation $\square\Phi =0$.  At a given time, we can measure the signal contained say in space ball $B$. The quantity $S_\Phi$ that represents the mean information stored by $\Phi$ in $B$ at that time is called the entropy of $\Phi$ with respect to $B$ and has been introduced in \cite{L18, L19, CLR19}, although different space regions (wedges) were there considered. 

With $\H$ a complex Hilbert space and $H\subset \H$ a factorial standard subspace, one defines the entropy of a vector $k\in \H$ with respect to $H$ as
\ben\label{Sintro}
S_k = \Im(k, P_H i \log\Delta_H\, k) \ .
\een
Here, $P_H : H + H' \to H$, $P_H: h + h' \mapsto h$,  is the cutting projection associated with $H$, $\Delta_H$ is the modular operator associated with $H$ and $H'$ is the symplectic complement of $H$.  

Motivated by Quantum Field Theory, one equips the waves' real linear space with a complex Hilbert space structure, where the imaginary part of the scalar product is given by the time-independent symplectic form
\[
\Im(\Phi,\Psi) = \frac12\int_{x_0 = t} \big( \Psi \partial_0 \Phi -  \Phi  \partial_0\Psi\big)d\bx \, .
\]
Then one considers the local net of standard subspaces associated with the resulting $\H$: $H(B)$ is the closure of the real linear space of waves with Cauchy data supported in $B$.  So one defines $S_\Phi$ as the entropy of the vector $\Phi$ with respect to $H(B)$. We have
\ben\label{Ss}
S_\Phi = \pi\int_{B} \big(\Psi\Phi'    - \Phi \Psi' \big)d\bx \  ,
\een
(time zero integral) where $\Psi = i\log\Delta_{B} \Phi$, the prime denotes the time derivative and $\Delta_{H(B)} = \Delta_B$. 

One then has to compute \eqref{Ss}. 
In \cite{L19, CLR19}, this computation has been worked out in the case of a wave for a half-space region (whose causal envelop is a wedge), a case motivated by the study of the Quantum Null Energy Condition inequality. 

From the point of view of information theory, it is however natural to consider the case that $B$ is a bounded region. Note that the computation in \cite{L19, CLR19} relies on the explicit knowledge of the modular Hamiltonian $\log \Delta_{W}$ \cite{BW}.

Now, $P_{H(B)}$ acts by cutting the Cauchy data \cite{CLR19}, so we have at our disposal all the ingredients to compute the local entropy $S_\Phi(R)$ of the wave packet $\Phi$ in the (causal envelop of) the radius $R$ space ball $B_R(\bar\bx)$ around the space point $\bar\bx$, at time $t$. We shall see that $S_\Phi(R)$
is the sum of two terms:
\begin{align*}
S_\Phi(R) &= \pi\int_{B_R(\bar\bx )} \frac{R^2 - r^2}{R} \langle T_{00}(t,\bx)\rangle_{\Phi} d\bx \qquad   \qquad\text{stress-energy tensor term}\\
& + \pi\frac{d-1}{2R}\int_{B_R(\bar\bx)} \Phi^2(t,\bx) d\bx    \quad\qquad\qquad \qquad \qquad \text{Born type term}
\end{align*}
with $r = |\bx - \bar\bx|$ and $T_{00}$ the energy density of $\Phi$ \eqref{ed}. 

We note here the appearance of the parabolic distribution $\frac12(1 - r^2)$ in the stress-energy term of
the formula for the modular Hamiltonian. We wonder about possible deep roots for this, somehow similarly to the appearance of the Wigner semicircular distribution \cite{Wig} in the Free Probability framework \cite{Voi}. The parabolic distribution in three-dimensional space
is a higher-dimensional generalisation of the Wigner semicircular distribution and is related to the marginal distribution of a spherical distribution. 

\smallskip
\noindent
{\it Content of this paper.} 
Our paper is organised as follows. First, we collect functional analytic results. Among these, we have 
a standard subspace version of the Pusz and Woronowicz \cite{PW} complete passivity characterisation, up to a proportionality constant, of the modular Hamiltonian. This reflects the second principle of thermodynamics. By our results, our local Hamiltonian is completely passive in our standard subspace sense. 
Then we describe the massless modular group in the wave setup, so we get an explicit description of the massless Hamiltonian. We then provide our local entropy formula for a wave packet and discuss some of the implications in Quantum Field Theory. 
Our general references for Operator Algebras and Quantum Field Theory are \cite{EK,Haag, Tak}. 

\section{Abstract preliminaries}

%
\subsection{Real linear invariant subspaces}
We now characterise the closed real linear subspaces of a complex Hilbert space $\H$ that are invariant for a one-parameter unitary group. 
The following proposition is indeed more general than what we shall need. 
\begin{proposition}\label{inv}
Let $\H$ be a Hilbert space, $H\subset \H$ a closed, real linear subspace and $A :D(A)\subset \H\to \H$ a selfadjoint operator. With $V(s) = e^{isA}$, $s \in \bR$, $K = iA$ and $\C$ (resp. $\B$) the real algebra of  complex, continuous functions $g$ on $\mathbb R$ vanishing at $\infty$ (resp. complex, bounded Borel functions on $\RR$) such that  $g(-t) = \bar g(t)$, the following are equivalent:
\begin{itemize}
\item[$(i)$]$V(s)H = H\, , \quad s\in \mathbb R$,
\item[$(ii)$] $g(A)H \subset H$, $g\in\C$,
\item[$(iii)$] $g(A)H \subset H$, $g\in\B$,
\item[$(iv)$]$(K^2 - 1)^{-1}H \subset H$ and $K(K^2 - 1)^{-1}H \subset H$,
\item[$(v)$]$(K \pm 1)^{-1}H \subset H$,
\item[$(vi)$] $K|_H$ is skew-selfadjoint on $H$, namely
 $D(K)\cap H$ is dense in $H$, $K(D(K)\cap H)\subset H$ and $K: (D(K)\cap H) \subset H\to H$
is skew-selfadjoint. 
\end{itemize}
\end{proposition}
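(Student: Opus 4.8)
My plan is to establish the cycle of implications
\[
(i)\Rightarrow(ii)\Rightarrow(iii)\Rightarrow(i),\qquad (ii)\Rightarrow(iv)\Leftrightarrow(v)\Rightarrow(vi)\Rightarrow(i),
\]
which together yield the full equivalence (the first line gives $(i)\!\Leftrightarrow\!(ii)\!\Leftrightarrow\!(iii)$, and the second loops $(iv),(v),(vi)$ back to $(i)$ through $(ii)$). The organising remark is that the symmetry $g(-t)=\bar g(t)$ cutting out $\C$ and $\B$ is exactly the property stable under real linear combinations, products and bounded pointwise limits, and that all the functions attached to $K=iA$ that one wants to test against $H$ — namely $e^{ist}$, $-(t^2+1)^{-1}$ and $-it(t^2+1)^{-1}$ — satisfy it. This is precisely why the resolvents of $K$, rather than of $A$, are the right objects for a \emph{real} invariant subspace.

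The analytic core is the passage from the group to the symmetric functional calculus, $(i)\Rightarrow(ii)\Rightarrow(iii)$. For $(i)\Rightarrow(ii)$, for real $c\in L^1(\bR)$ the strong integral $\int_\bR c(s)V(s)\,ds=\hat c(A)$ is a limit of real linear combinations $\sum_j c(s_j)V(s_j)\Delta s_j$, each preserving $H$; since $H$ is closed, $\hat c(A)H\subset H$, and the Hermitian transforms $\hat c$ approximate an arbitrary $g\in\C$ (boundedly pointwise, hence $g_n(A)\to g(A)$ strongly), giving $g(A)H\subset H$. For $(ii)\Rightarrow(iii)$ I would observe that $\cR_H=\{g\in\B:g(A)H\subset H\}$ is a real subalgebra of $\B$ closed under bounded pointwise limits (such limits induce strong convergence of $g(A)$, and $H$ is closed); since it contains $\C$ and the $e^{ist}$ separate points of $\bR$, the functional monotone class theorem forces $\cR_H=\B$. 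Finally $(iii)\Rightarrow(i)$ is immediate on testing with $g(t)=e^{ist}\in\B$. I expect this functional-calculus passage — keeping the approximation and the monotone-class closure compatible with the symmetry constraint — to be the main obstacle; the remaining implications are algebraic.

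For $(ii)\Rightarrow(iv)$ it suffices to note $(K^2-1)^{-1}=-(A^2+1)^{-1}=g_1(A)$ and $K(K^2-1)^{-1}=-iA(A^2+1)^{-1}=g_2(A)$ with $g_1(t)=-(t^2+1)^{-1}$ and $g_2(t)=-it(t^2+1)^{-1}$ both in $\C$. The equivalence $(iv)\Leftrightarrow(v)$ is the resolvent algebra: $(K^2-1)^{-1}=(K-1)^{-1}(K+1)^{-1}$ and $K(K^2-1)^{-1}=\tfrac12\big((K-1)^{-1}+(K+1)^{-1}\big)$, while conversely $(K\pm1)^{-1}=K(K^2-1)^{-1}\mp(K^2-1)^{-1}$, so each family is built from the other by products and real combinations, which $H$ (real linear and closed) respects. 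For $(v)\Rightarrow(vi)$ set $D=\{\xi\in H\cap D(K):K\xi\in H\}$ and $S=K|_D$. Given $h\in H$, $(K+1)^{-1}h\in H$ by $(v)$ and $K(K+1)^{-1}h=h-(K+1)^{-1}h\in H$, so $(K+1)^{-1}H\subset D$ and $S+1$ maps $D$ onto $H$; likewise for $S-1$. Density of $D$ follows since if $\eta\in H$ is real-orthogonal to $(K+1)^{-1}H$, then using $\big((K+1)^{-1}\big)^*=-(K-1)^{-1}$ on $\H$ together with $(v)$, the vector $(K-1)^{-1}\eta\in H$ is real-orthogonal to all of $H$, hence $0$, so $\eta=0$. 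As $K^*=-K$ makes $S$ skew-symmetric, the criterion recalled above ($S$ skew-Hermitian with $\ran(S\pm1)=H$) gives that $S$ is skew-selfadjoint, which is $(vi)$.

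To close the loop, $(vi)\Rightarrow(i)$: by the real form of Stone's theorem $S$ generates a one-parameter orthogonal group $U(s)=e^{sS}$ on $H$, and since $S\subset K$ and $V(s)=e^{sK}$, for $\xi\in D$ both $U(s)\xi$ and $V(s)\xi$ solve $\dot x=Kx$ with $x(0)=\xi$; uniqueness (from $\tfrac{d}{ds}\|U(s)\xi-V(s)\xi\|^2=2\,\Re\langle K(U(s)\xi-V(s)\xi),U(s)\xi-V(s)\xi\rangle=0$) gives $U(s)\xi=V(s)\xi$, and by density of $D$ and boundedness $U(s)=V(s)|_H$. Hence $V(s)H=U(s)H=H$, which is $(i)$, completing the circle.
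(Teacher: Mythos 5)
Your proof is correct, but it is organised differently from the paper's. The paper runs the single chain $(i)\Rightarrow(v)\Rightarrow(iv)\Rightarrow(ii)\Rightarrow(iii)\Rightarrow(i)$ and attaches $(vi)$ by declaring $(i)\Rightarrow(vi)$ clear and proving $(vi)\Rightarrow(v)$ via the Laplace transform $(1\pm K)^{-1}=\int_0^\infty e^{-s}V_H(\mp s)\,ds$ of the restricted group; you instead close $(i)\Rightarrow(ii)\Rightarrow(iii)\Rightarrow(i)$ and route $(iv),(v),(vi)$ back through $(vi)\Rightarrow(i)$ by an ODE-uniqueness argument. Three substantive differences. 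First, for $(iv)/(v)\Rightarrow(ii)$ the paper stays inside the resolvent algebra and uses Stone--Weierstrass on polynomials in $(t^2+1)^{-1}$ and $it(t^2+1)^{-1}$ (a form it reuses verbatim in Prop.\ \ref{diffC}), whereas you go from $(i)$ directly via Fourier transforms of real $L^1$ functions; both work, the paper's version is the one that generalises to its later compactness arguments. Second, for $(ii)\Rightarrow(iii)$ the paper uses Lusin's theorem plus dominated convergence on the spectral measures of a pair $h\in H$, $k\in H^\perp$, while you use a bounded-pointwise-closure (Baire/monotone class) argument; your sketch glosses over the fact that $\cR_H$ is only the Hermitian-symmetric part of an algebra, so one should pass to $\cR_H+i\cR_H$ and note that symmetric and antisymmetric parts of a bounded pointwise limit are again bounded pointwise limits of such parts --- routine, but worth a line. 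Third, your direct proof of $(v)\Rightarrow(vi)$ is more work than the paper's ``clearly $(i)\Rightarrow(vi)$'' but also more honest, since the density of $D(K)\cap H$ in $H$ is exactly the nontrivial content being waved away; the one point you should add is that your domain $D=\{\xi\in D(K)\cap H: K\xi\in H\}$ actually equals $D(K)\cap H$ as required by the statement of $(vi)$ --- this follows in one line because skew-selfadjointness of $K|_D$ on $H$ together with skew-symmetry of $K$ on $\H$ puts every $\xi\in D(K)\cap H$ into $D\bigl((K|_D)^{*}\bigr)=D$. With that sentence added, the argument is complete.
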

\begin{proof}
$(i) \Rightarrow (v)$:
Let $h$ be a real $L^1$-function on $\RR$ and $g = \hat h$ its Fourier transform. Then
\[
g(A)  = \int_\RR h(s)V( s)ds \, .
\]
If $(i)$ holds true, then $g(A)H \subset H$ for such $g$. In particular $(v)$ holds, similarly as in \eqref{VH}. 

$(v)\Rightarrow (iv)$ follows by the identities 
\[
2K(K^2 - 1)^{-1} = (K + 1)^{-1} + (K - 1)^{-1}\, ,\quad 2(K^2 - 1)^{-1} = (K + 1)^{-1} - (K - 1)^{-1}\, .
\]

$(iv)\Rightarrow (ii)$: 
The complexification $\C_{\mathbb C}$ of $\C$ is the $C^*$-algebra $C$ of continuous functions on $\mathbb R$ vanishing at infinity, namely every $F\in C$ is written uniquely as
\[
F = f + i g\, , \quad f,g\in\C \ , \qquad f(t) =\frac12 [F(t) + \bar F(-t)],\ g(t) =\frac1{2i} [F(t) - \bar F(-t)].
\]
Note that
$\C$ contains the real algebra $\C_0$ of functions of the form $f(t) = p((t^2 +1)^{-1}, it(t^2 +1)^{-1})$ with $p$ a two-variables polynomial with real coefficients and zero constant coefficient. 
Note that $\C$ and $\C_0$ are closed with respect to complex conjugation. 

Since $\C_0$ separates the points of $\bR$, given any $f\in\C$, there exists a sequence $F_n = f_n + i g_n$ with $f_n, g_n\in \C_0$ such that $F_n \to f$ uniformly on $\bR$, thus $f_n \to f$ uniformly on $\bR$. 
We conclude that $\C_0$ is norm dense in $\C$. 
Therefore we have $(ii)$. 

$(ii)\Rightarrow (iii)$: Let $g\in \B$. Given $h\in H,\, k\in H^\perp$ (real orthogonal), by Lusin's theorem there exists a bounded sequence of continuous functions $g_n$ 
such that $g_n \to g$ almost everywhere w.r.t. the spectral measure of $A$ associated with $h,k$. By replacing $g_n(t)$ with $g_n(t) + \bar g_n(-t)$, we may assume that $g_n\in\C$. 
By Lebesgue's dominated convergence theorem, we have $(k, g_n(A)h) \to (k, g(A)h)$. As $\Re(k, g_n(A)h)=0$, we have $\Re(k, g(A)h)=0$, that implies $g(A)H\subset H$ because $h,k$ can be arbitrarily chosen. 

$(vi)\Rightarrow (v)$: $K|_H$ generates a one-parameter group $V_H$ of orthogonal operators on $H$, therefore
\begin{equation}\label{VH}
(1 \pm K)^{-1} H =  -\int_0^\infty e^{- s}V_H(\mp s)ds\, H \subset H\, .
\end{equation}

Clearly, $(i)\Rightarrow (vi)$, $(iii)\Rightarrow (ii)$ and $(iii)\Rightarrow (i)$. 
\end{proof}

\subsection{Modular Hamiltonian and cutting projections}\label{BW}

Let $\H$ be a complex Hilbert space. 
A {\it standard subspace} $H$ of $\H$ is a closed, real linear subspace of $\H$ with
\[
\overline{H + i H} = \H\, ,\quad H\cap i H = \{0\}\, .
\]
Let $H\subset\H$ be a standard subspace of the complex Hilbert space $\H$
and $\Delta_H$, $J_H$ be the {\it modular operator} and {\it conjugation} of $H$, namely
\[
S_H = J_H \Delta^{1/2}_H
\]
is the polar decomposition of the antilinear involution $S_H : h + i k \mapsto h -i k$, $h,k\in H$, ({\it Tomita's operator}). Then 
\[
\Delta_H^{is} H = H\, , \quad J_H H = H'\, ,
\]
$s\in \RR$, where $H'= iH^\perp$ is the {\it symplectic complement} of $H$, with $\perp$ denoting the orthogonal w.r.t. the real part of the scalar product. $\Delta_H^{is}$  and $\log\Delta_H$ are called the {\it modular unitary group} and the {\it modular Hamiltonian}  of $H$. 

In this section, and in the following one, we assume $H$ to be {\it factorial}, i.e. $H\cap H' = \{0\}$, that is $H + H'$ is dense in $\H$;
equivalently $1$ is not an eigenvalue of $\Delta_H$.  

Associated with $H$, there are three other standard subspaces: the real orthogonal $H^\perp$, the symplectic complement $H'$, and $iH$. We shall consider the three associated real linear projections
\begin{align*}
&E_H : H + H^\perp \to H\ , \quad h + h^\perp \mapsto h\ ,\\
&P_H : H + H' \to H\ , \quad  h + h' \mapsto h\ ,\\
&Q_H : H + iH\to H , \quad h + ik \mapsto h\ .
\end{align*}
Note that $E_H,P_H,Q_H$ are closed, densely defined real linear operators, $E_H$ is bounded and $P_H$ is the {\it cutting projection} \cite{CLR19}. We have
\begin{align}
&E_H = (1 + \Delta_H)^{-1} + J_H\Delta_H^{1/2} (1 + \Delta_H)^{-1}\ , \label{fE}\\
&P_H = (1 - \Delta_H)^{-1} + J_H\Delta_H^{1/2} (1 - \Delta_H)^{-1}\ , \label{fP} \\
&Q_H  =\frac12 ( 1 + S_H)\ ;  \label{fQ}
\end{align}
more precisely, \eqref{fP} means that
$P_H$ is the closure of the operator $a(\Delta_H) + b(\Delta_H)J_H$ on $ D(a(\Delta_H))\cap D(b(\Delta_H))$, with
\[
a(\l) = (1 - \l)^{-1}\ , \quad b(\l) = \l^{1/2}(1 - \l)^{-1}\ , \ \l\in (0,+\infty)\setminus\{1\}\, .
\]
Formulas \eqref{fE} and \eqref{fP} were obtained respectively in \cite{FG} (see also \cite{LX-vN}) and in \cite{CLR19}; formula \eqref{fQ} is straightforward. We then have
\[
E_H = 2Q_H(1 + \Delta_H)^{-1}\ , \quad P_H = 2Q_H(1 - \Delta_H)^{-1}\ ,
\]
hence
\ben\label{PE}
P_H = E_H (1 + \Delta_H) (1 - \Delta_H)^{-1} = E_H\coth\Big(\frac12\log\Delta_H\Big)
\een
(denoting by the same symbol an operator and its closure). 
\begin{proposition}\label{Pir}
$P_H i$ restricts to a skew-selfadjoint operator on $H$ and
\ben\label{gP}
P_H i|_H  = i\coth\Big(\frac12\log\Delta_H\Big)\Big|_H\, .
\een
\end{proposition}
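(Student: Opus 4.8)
The plan is to read the formula off directly from the geometric definition of $P_H$ as the projection of $H+H'$ onto $H$ along $H'$, and to get skew-selfadjointness from Proposition \ref{inv}. Throughout write $C:=\coth\big(\tfrac12\log\Delta_H\big)=(1+\Delta_H)(1-\Delta_H)^{-1}$ as in \eqref{PE}, a complex-linear selfadjoint operator on $\H$, densely defined precisely because $H$ is factorial, so that $1\notin\sigma_p(\Delta_H)$. I would first work on a convenient core of $H$, namely vectors in $\mathbf 1_I(\Delta_H)H$ with $I$ a compact, inversion-symmetric ($\lambda\mapsto\lambda^{-1}$) subset of $(0,\infty)\setminus\{1\}$; such a spectral cutoff lies in $H$ by Proposition \ref{inv}$(iii)$ and makes every function of $\Delta_H$ below bounded. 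On this core I decompose $ih=iCh+v$ with $v:=i(1-C)h=-2i\,\Delta_H(1-\Delta_H)^{-1}h$, and the crux is that the two summands fall into $H$ and $H'$ respectively. Using $S_H=J_H\Delta_H^{1/2}$, the relation $J_Hh=\Delta_H^{1/2}h$ for $h\in H$, and $J_HC=-CJ_H$ (which holds since $c(\lambda)=(1+\lambda)(1-\lambda)^{-1}$ satisfies $c(\lambda^{-1})=-c(\lambda)$), one checks $S_H(iCh)=iCh$, so $iCh\in H$; using instead $S_{H'}=J_H\Delta_H^{-1/2}$ one checks $S_{H'}v=v$, so $v\in H'$. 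Since $H\cap H'=\{0\}$ this decomposition is unique, and by the definition of the cutting projection $P_H(ih)=iCh$, which is \eqref{gP} on the core.

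For skew-selfadjointness I would invoke Proposition \ref{inv} twice. The modular invariance $\Delta_H^{is}H=H$ is condition $(i)$ for the selfadjoint generator $\log\Delta_H$, hence $(iii)$ holds: $g(\log\Delta_H)H\subset H$ for every bounded Borel $g$ with $g(-t)=\overline{g(t)}$. The essential observation is that, although $C$ is unbounded, the function $\phi_s(t):=e^{\,is\coth(t/2)}$ is bounded and satisfies $\phi_s(-t)=\overline{\phi_s(t)}$; therefore $e^{isC}=\phi_s(\log\Delta_H)$ leaves $H$ invariant for all $s\in\RR$. Applying Proposition \ref{inv} a second time, now to the selfadjoint operator $C$ and its unitary group $e^{isC}$, condition $(vi)$ gives that $iC|_H$ is skew-selfadjoint on $H$ with domain exactly $D(C)\cap H$.

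Finally I would patch the two halves together. The spectral cutoffs $\mathbf 1_I(\Delta_H)h$ approximate any $h\in D(C)\cap H$ in the graph norm of $C$, so they form a core for $iC|_H$; by closedness of $P_H$ the identity $P_Hi=iC$ extends from the core to all of $D(C)\cap H$, giving $P_Hi|_H\supseteq iC|_H$. Since a skew-selfadjoint operator is maximal among skew-symmetric ones, equality $P_Hi|_H=iC|_H$ follows once the restriction of $P_Hi$ to $H$ is known to be skew-symmetric rather than a proper dissipative extension; the reverse domain inclusion $\{h\in H:ih\in H+H'\}\subseteq D(C)\cap H$ can be extracted from the explicit description of $D(P_H)$ in \eqref{fP}, since $D((1-\Delta_H)^{-1})=D(C)$. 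I expect the only genuine difficulty to be this domain bookkeeping near the modular fixed point $\Delta_H=1$: $C$ is unbounded exactly on the part of $\sigma(\Delta_H)$ accumulating at $1$, so the algebraic identities $J_Hh=\Delta_H^{1/2}h$, $J_HC=-CJ_H$ and the splitting $ih=iCh+v$ must be justified on the spectral core first and only then transported to the full domain, with factoriality ensuring that $C$ is densely defined in the first place.
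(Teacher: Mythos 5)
Your argument is correct, but it takes a genuinely different route from the paper's. The paper's proof is two lines: since $\tanh$ is a bounded odd function, Prop.~\ref{inv} gives that $i\tanh\big(\tfrac12\log\Delta_H\big)\big|_H$ is a bounded skew-selfadjoint operator on $H$; its inverse $-i\coth\big(\tfrac12\log\Delta_H\big)\big|_H$ is then automatically skew-selfadjoint, and \eqref{gP} follows by simply restricting the already-established identity \eqref{PE}, $P_H=E_H\coth\big(\tfrac12\log\Delta_H\big)$, to $H$ (where $E_H$ acts as the identity). You instead re-derive the formula from the geometric definition of the cutting projection, exhibiting the splitting $ih=iCh+v$ with $iCh\in\ker(1-S_H)=H$ and $v\in\ker(1-S_{H'})=H'$ on a spectral core, and you obtain skew-selfadjointness from Prop.~\ref{inv}$(vi)$ applied to $C$ via the unitaries $e^{isC}=\phi_s(\log\Delta_H)$. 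Both halves of your computation check out (in particular $J_HCJ_H=-C$ and $J_H\Delta_H^{-1/2}h=\Delta_H h$ for $h\in H$ are used correctly), and your approach has the merit of being self-contained: it explains \emph{why} $\coth$ appears, essentially reproving formula \eqref{fP} on $iH$ rather than quoting it. The price is the domain bookkeeping you flag at the end: note that the inclusion $\{h\in H: ih\in H+H'\}\subseteq D(C)$ does not follow directly from \eqref{fP}, since the closure there may have a larger domain than $D(a(\Delta_H))\cap D(b(\Delta_H))$; the clean way to finish is your maximality alternative, using \eqref{PP*} to see that $(P_Hi)^*=-P_Hi$ with respect to $\Re(\cdot,\cdot)$, so that $P_Hi|_H$ is a skew-symmetric extension of the skew-selfadjoint $iC|_H$ and hence equals it. The paper's detour through the bounded $\tanh$ and the pre-packaged identity \eqref{PE} is precisely what lets it avoid this bookkeeping altogether.
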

\begin{proof}
$H$ is $i\tanh(\frac12\log\Delta_H)$-invariant by Prop. \ref{inv}, because the hyperbolic tangent is a bounded, odd function; so   $i\tanh(\frac12\log\Delta_H)|_H$ is bounded, skew-selfadjoint on $H$. 
Therefore, its inverse $-i\coth(\frac12\log\Delta_H)|_H$ is a skew-selfadjoint operator on $H$. Namely, $H$ is $i\coth(\frac12\log\Delta_H)$-invariant. 
Equation \eqref{PE} then gives \eqref{gP} by restriction. 
\end{proof}
We end this subsection by noting the following relations
\begin{gather}
P_H^* = -iP_H i\, ;\label{PP*}\\
 P_H i \log\Delta_H = i\log\Delta_H\, P_H \, .\label{PP**}
\end{gather}
The first relation is proved in \cite{CLR19}, the second one is valid on $D(P_H i \log\Delta_H)\cap D(i\log\Delta_H\, P_H)$
because $H$ is $i\log\Delta_H$-invariant. 

\subsection{Entropy and quadratic forms}\label{Eqf}
Recall the {\it entropy of a vector} $k$ in a Hilbert space $\H$ with respect to a factorial standard subspace $H$ is defined by
\ben\label{Se}
S_k = \Im(k, P_H i \log\Delta_H\, k) = \Re (k, P^*_H \log\Delta_H\, k)  \, ,
\een
where $P_H$ is the cutting projection $P_H : H + H' \to H$. Formula \eqref{Se} is to be understood in the sense of quadratic forms; indeed, we now define the quadratic form $S_k = S(k,k)$. 

Let $\D = D(\sqrt{|\log\Delta_H|}\, F)$, with $F$ the spectral projection of $\Delta_H$ relative to the interval $(0,1)$. Then $\D$ is a dense linear subspace of $\H$. 
With $h, k\in \D$, we set
\begin{equation}\label{Sb}
S(h, k) = \Re(h, P^*_H \log\Delta_H\, k) = \Im(h, P_H i\log\Delta_H k)   \, .
\end{equation}
More precisely, following the discussion il \cite{CLR19}, let $E(\l)$ be the spectral family of $\Delta_H$, 
the right-hand side of \eqref{Sb} is defined by on $\D$ by
\begin{equation}\label{Sb2}
\Im(h, P_H i\log\Delta_H k) =  \int_0^{+\infty}  a(\l)\log\l\, d(h,E(\l)k) - \int_0^{+\infty}  b(\l)\log\l \, d(h,J_H E(\l)k) \ .
\end{equation}
Note that $b(\l)\log\l$ is a bounded function, so the right integral is always finite. Moreover, $a(\l)\log\l$ is bounded on $(1 , +\infty)$ and positive on $(0,1)$. So the above formula is well defined by the spectral theorem, provided $h,k\in\D$.  In particular,
\[
S_k < \infty\ \Longleftrightarrow\ k\in \D\, ,
\]
\cite[Prop. 2.4]{CLR19}. 

Set $\D_0 =  D(a(\Delta_H)\log\Delta_H)\cap D(b(\Delta_H)\log\Delta_H) \subset D(\log\Delta_H\, P_H)\cap \D$. 

\begin{lemma}\label{PHself}
 $P^*_H \log\Delta_H$ is essentially selfadjoint on $\D_0$ and   $P^*_H \log\Delta_H |_{\D_0} = \log\Delta_H\, P_H |_{\D_0}$. 
 \end{lemma}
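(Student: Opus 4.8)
The plan is to reduce everything to the spectral calculus of $\Delta_H$ together with the single modular relation $J_H f(\Delta_H) = \bar f(\Delta_H^{-1})J_H$ (valid for Borel $f$ since $J_H\Delta_H J_H = \Delta_H^{-1}$), whose special case $J_H\log\Delta_H = -\log\Delta_H\, J_H$ is all that is really needed. Throughout, self-adjointness must be read with respect to the \emph{real} inner product $\Re(\cdot,\cdot)$, since the operators involved are real linear (neither complex linear nor antilinear) because of the $J_H$ term.

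First I would write the operators explicitly. From the formula for $P_H$ below \eqref{fQ}, $P_H$ is the closure of $a(\Delta_H) + b(\Delta_H)J_H$ with $a(\l)=(1-\l)^{-1}$, $b(\l)=\l^{1/2}(1-\l)^{-1}$. Using $P_H^* = -iP_Hi$ from \eqref{PP*} (the complex-linear summand $a(\Delta_H)$ is fixed, while conjugating the antilinear summand $b(\Delta_H)J_H$ by $i$ flips its sign), one gets $P_H^* = a(\Delta_H) - b(\Delta_H)J_H$. Then on $\D_0$ I would compute both products. For $\log\Delta_H P_H$ the factor $\log\Delta_H$ combines with the functions of $\Delta_H$ before $J_H$ acts, giving $\alpha(\Delta_H)+\beta(\Delta_H)J_H$; for $P_H^*\log\Delta_H$ one distributes $\log\Delta_H$ and crosses it past $J_H$ in the second term, the sign change cancelling the sign in $P_H^*$, again yielding $\alpha(\Delta_H)+\beta(\Delta_H)J_H$. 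Here $\alpha(\l)=(1-\l)^{-1}\log\l$ and $\beta(\l)=\l^{1/2}(1-\l)^{-1}\log\l$. This identifies the two products and, once domains are matched, proves the asserted equality $P^*_H \log\Delta_H|_{\D_0} = \log\Delta_H\, P_H|_{\D_0}$. (The literal composite $P_H^*\log\Delta_H$ would require $\D_0\subset D(\log\Delta_H)$, which fails; the operator in the statement is understood as this distributed expression, consistent with the spectral integral \eqref{Sb2}.)

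The key structural observation is that $\beta$ is bounded: it is continuous on $(0,\infty)$ with finite limits $\beta(0^+)=\beta(+\infty)=0$ and $\beta(1)=-1$ (the apparent singularity at $\l=1$ being removable). Hence $\beta(\Delta_H)$, and therefore $\beta(\Delta_H)J_H = b(\Delta_H)\log\Delta_H\, J_H$, is a bounded operator. Using $\beta(\l^{-1})=\beta(\l)$ together with $J_H\beta(\Delta_H)=\beta(\Delta_H^{-1})J_H$, I would check that $\beta(\Delta_H)J_H$ is symmetric for $\Re(\cdot,\cdot)$, hence bounded $\Re$-self-adjoint. On the other hand $\alpha$ is real and continuous on $(0,\infty)$ (again $\alpha(1)=-1$, with the only unboundedness like $\log\l$ near $\l=0$), so $\alpha(\Delta_H)$ is complex-linear, hence $\Re$-, self-adjoint. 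Since $\beta$ is bounded, $D(\beta(\Delta_H))=\H$ and therefore $\D_0 = D(a(\Delta_H)\log\Delta_H)\cap D(b(\Delta_H)\log\Delta_H) = D(\alpha(\Delta_H))$.

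Finally I would conclude by perturbation. Viewing $\H$ as a real Hilbert space under $\Re(\cdot,\cdot)$, the operator
\[
P^*_H \log\Delta_H\big|_{\D_0} \;=\; \alpha(\Delta_H) + \beta(\Delta_H)J_H
\]
is the sum of the $\Re$-self-adjoint operator $\alpha(\Delta_H)$ with domain $\D_0=D(\alpha(\Delta_H))$ and the bounded $\Re$-self-adjoint operator $\beta(\Delta_H)J_H$; by the Kato--Rellich theorem (which holds verbatim in the real setting, with relative bound $0$) it is self-adjoint on $\D_0$, in particular essentially self-adjoint there. I expect the main obstacle to be purely conceptual rather than computational: correctly setting up self-adjointness for the real-linear operator $\beta(\Delta_H)J_H$ with respect to $\Re(\cdot,\cdot)$, and justifying both the boundedness of $\beta(\Delta_H)=b(\Delta_H)\log\Delta_H$ and the resulting domain identification $\D_0=D(\alpha(\Delta_H))$ that lets the distribution of $\log\Delta_H$ across the two summands of $P_H^*$ be rigorous.
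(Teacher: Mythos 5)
Your proposal is correct, but it proves the lemma by a genuinely different route from the paper. The paper works abstractly: it uses \eqref{PP*} and \eqref{PP**} to show that $(P_H^*\log\Delta_H)^*$ agrees with $P_H^*\log\Delta_H=\log\Delta_H\,P_H$ on $\D_0$, and then establishes selfadjointness of the closure by introducing the spectral cut-offs $E_\e$ of $\log\Delta_H$ (which leave $H+H'$ invariant by Prop.~\ref{inv}, commute with $\log\Delta_H\,P_H$, and make $\log\Delta_H\,P_H\,E_\e$ bounded via \eqref{fP}), letting $\e\searrow 0$. You instead make the operator completely explicit in the joint functional calculus of $\Delta_H$ and $J_H$, writing it as $\alpha(\Delta_H)+\beta(\Delta_H)J_H$ with $\alpha=a\log$, $\beta=b\log$, and observe that $\beta$ is bounded with $\beta(\l^{-1})=\beta(\l)$, so the $J_H$-term is a bounded $\Re$-symmetric operator and the conclusion follows from stability of selfadjointness under bounded symmetric perturbations. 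Your computations check out: $P_H^*=-iP_Hi=a(\Delta_H)-b(\Delta_H)J_H$, the sign flip from $J_H\log\Delta_H=-\log\Delta_H J_H$ cancels the sign in $P_H^*$, and $\Re$-symmetry of $\beta(\Delta_H)J_H$ follows from antiunitarity of $J_H$ and $J_H\beta(\Delta_H)=\beta(\Delta_H)J_H$. Your approach buys more: it identifies the closure explicitly as the selfadjoint operator $\alpha(\Delta_H)+\beta(\Delta_H)J_H$ on $D(\alpha(\Delta_H))$ and isolates the one analytic fact (boundedness of $b(\l)\log\l$) that makes everything work; the paper's argument stays within the abstract standard-subspace machinery it has already set up and never commits to the explicit functional form. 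One small caveat: your identification $\D_0=D(\alpha(\Delta_H))$ presupposes reading the products in the definition of $\D_0$ through the functional calculus (domain of the product function); under the literal composite reading one gets $\D_0=D(\log\Delta_H)\cap D(\alpha(\Delta_H))$, which is only a core for $\alpha(\Delta_H)$ --- but this does not affect the conclusion, since a bounded symmetric perturbation also preserves essential selfadjointness on a core, so the claimed essential selfadjointness on $\D_0$ holds under either reading.
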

\begin{proof}
By equations \eqref{PP*}, we have
\[
(P^*_H  \log\Delta_H )^* h=  \log\Delta_H\, P _H h = -i i\log\Delta_H \, P _H h
= - i P _H i \log\Delta_H   h = P^* _H \log\Delta_H h
\]
for all $h\in \D_0$, namely $(P^*_H  \log\Delta_H)^*|_{\D_0} =  P^*_H  \log\Delta_H |_{\D_0} = \log\Delta_H\, P_H |_{\D_0}$,
so it suffices to show that the closure $\overline{\log\Delta_H\, P_H}$ of $\log\Delta_H\, P_H |_{\D_0}$ is  selfadjoint. 

With $0 < \e <1$, let $E_\e$ be the spectral projection of $\log\Delta_H$ relative to the set $(-\e^{-1}, -\e )\cup (\e, \e^{-1})$. Then $E_\e (H + H') \subset H + H'$ by using Prop. \ref{inv}; moreover  $E_\e (H + H') \subset \D_0$. 
So $E_\e \H \subset D(\overline{\log\Delta_H\, P_H})$ because 
$\log\Delta_H\, P_H\, E_\e$ is bounded by formula \eqref{fP} and $E_\e$ commutes with $\log\Delta_H\, P_H$, thus also with $(\log\Delta_H\, P_H)^*$. 

We conclude that $(\log\Delta_H\, P_H)^* E_\e = \overline{\log\Delta_H\, P_H} E_\e$ and
the lemma follows since $E_\e \nearrow 1$ as $\e\searrow 0$. 
\end{proof}
\begin{proposition}\label{entq}
 $S$ is a real linear, closed, symmetric, positive quadratic form on $\D$. 
 \end{proposition}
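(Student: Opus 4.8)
The plan is to represent $S$ as the quadratic form of a single semibounded self-adjoint operator and to read off all four properties from that representation. By Lemma~\ref{PHself}, the operator $A:=\overline{P^*_H\log\Delta_H\,|_{\D_0}}=\overline{\log\Delta_H\,P_H\,|_{\D_0}}$ is self-adjoint with core $\D_0$, and for $h,k\in\D_0$ one has $S(h,k)=\Re(h,Ak)$. Since $A$ is self-adjoint, $\Re(h,Ak)$ is automatically $\RR$-bilinear, and because $(h,Ak)=\overline{(k,Ah)}$ it is symmetric; thus real-linearity and symmetry hold on the core $\D_0$ with no further work. The two remaining points, positivity and closedness, are really statements about $A$ and its form domain, and the whole proof reduces to analysing $A$.

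For positivity, which I expect to be the main obstacle, I would diagonalise $A$ using the interplay between $\log\Delta_H$ and $J_H$. Since $J_H\Delta_H J_H=\Delta_H^{-1}$, the conjugation anticommutes with $\log\Delta_H$, i.e. $J_H\log\Delta_H=-\log\Delta_H\,J_H$, so it intertwines the spectral subspace of $\log\Delta_H$ at $t$ with the one at $-t$; as $H$ is factorial, $1$ is not an eigenvalue of $\Delta_H$, hence $\ker\log\Delta_H=\{0\}$ and these subspaces pair up. This yields the standard modular normal form, a direct integral $\H\cong\int^{\oplus}_{(1,\infty)}\CC^2\,d\nu(\l)$ on which $\Delta_H$ acts fibrewise as $\mathrm{diag}(\l,\l^{-1})$ and $J_H$ as the coordinate flip composed with complex conjugation. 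On each fibre the formula \eqref{fP} expresses $P_H$, hence $A$, as an explicit $2\times2$ matrix, and a direct computation shows that (up to the convention fixing $\Delta_H$ versus $\Delta_H^{-1}$) the associated scalar quadratic form is $\frac{\log\l}{\l-1}\,\bigl|z_1-\l^{1/2}\bar z_2\bigr|^2$, a non-negative multiple of a perfect square, since $\frac{\log\l}{\l-1}\ge0$ for all $\l>0$. Integrating over the fibres then gives $A\ge0$, so that $S(k,k)\ge0$ for all admissible $k$. The delicate points here are purely the compatibility of the direct-integral decomposition with the operator closures and the fibrewise sign bookkeeping.

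Finally, for closedness I would identify the form domain of $A$ with $\D$ and invoke that the quadratic form of a semibounded self-adjoint operator is automatically closed. Writing $A$, via \eqref{fP}, as $a(\Delta_H)\log\Delta_H+J_H\,b(\Delta_H)\log\Delta_H$, the factor $b(\l)\log\l$ is bounded, so the $J_H$-term contributes a bounded symmetric form and affects neither the form domain nor closedness; the unbounded behaviour comes only from $a(\l)\log\l$, which is bounded for $\l>1$ and grows like $\log\l$ as $\l\to0$. Hence the form domain $D(|A|^{1/2})$ coincides with $D\bigl(\sqrt{|\log\Delta_H|}\,F\bigr)=\D$, with $F$ the spectral projection of $\Delta_H$ on $(0,1)$. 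Closedness of $S$ on $\D$ follows. Real-linearity, symmetry and positivity, already established on the core $\D_0$, then extend to all of $\D$ by approximating $k\in\D$ in the form norm by the vectors $E_\e k\in\D_0$ (as in the proof of Lemma~\ref{PHself}) and passing to the limit in the representation \eqref{Sb2}, where the integrand $b(\l)\log\l$ is bounded while $a(\l)\log\l$ is controlled by $\sqrt{|\log\Delta_H|}\,F$.
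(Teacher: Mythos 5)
Your proposal is correct and, for two of the four properties, runs along the same track as the paper: symmetry comes from Lemma~\ref{PHself}, and closedness comes from representing $S$ by the (essentially self-adjoint) operator $\overline{P_H^*\log\Delta_H|_{\D_0}}$ together with the identification of its form domain with $\D$. Where you genuinely diverge is on positivity. The paper simply asserts ``$S$ is real linear and positive,'' leaning on the spectral representation \eqref{Sb2} and the sign remarks made after it (ultimately deferring to \cite{CLR19}), whereas you supply an actual argument: the $2\times 2$ fibrewise normal form for $(\Delta_H,J_H)$ obtained from $J_H\Delta_H J_H=\Delta_H^{-1}$ and factoriality, under which the form of $\log\Delta_H P_H$ becomes an explicit non-negative multiple of a perfect square on each fibre. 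This is a real improvement in completeness, and your observation that factoriality kills $\ker\log\Delta_H$ so the fibres pair up is exactly the point that makes the decomposition legitimate. The one place you must be careful is the sign bookkeeping you yourself flag: with the paper's convention $a(\l)=(1-\l)^{-1}$ in \eqref{fP} and the normal form $\Delta_H=\mathrm{diag}(\l,\l^{-1})$ on fibres over $(1,\infty)$, a direct computation produces the coefficient $\frac{\log\l}{1-\l}$ rather than $\frac{\log\l}{\l-1}$ in front of $\bigl|z_1-\l^{1/2}\bar z_2\bigr|^2$, so the positivity of the fibre coefficient hinges on fixing the $\Delta_H$ versus $\Delta_H^{-1}$ convention consistently with the definition of $S_k$; this is not cosmetic and should be written out once explicitly rather than waved at. Your closedness argument is also slightly more self-contained than the paper's (which says $\D_0$ is a form core and stops), since you actually identify $D(|A|^{1/2})$ with $\D$ by separating the bounded $J_H$-part, governed by $b(\l)\log\l$, from the unbounded part governed by $a(\l)\log\l\sim\log\l$ as $\l\to 0$; note only that the semiboundedness you invoke for automatic closedness is supplied either by the positivity you just proved or by the bounded-perturbation structure, so there is no circularity provided you order the steps as you do.
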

\begin{proof}
$S$ is real linear and positive.  By Lemma \ref{PHself}, $S$ is also symmetric on $\D$. 
Thus $S$ closable on $\D_0$, being associated with the closable, real linear operator $P^*_H \log\Delta_H$ by eq. \eqref{Sb}, 
cf. the proof of  \cite[Thm. 1.27]{Ka}.  Indeed, $S$ is closed on $\D$ because $\D_0$ is a core for $\overline{\log\Delta_H\, P_H}$, so a form core for $S$. 
\end{proof}

\subsection{Standard subspaces and passivity}\label{PW}
We provide here a standard subspace version of Pusz-Woronowicz's complete passivity characterisation of the modular Hamiltonian in the $C^*$-algebraic setting \cite{PW}. 

Let $\H$ be a complex Hilbert space and $H$ standard subspace.  
In this section, $A$ is a selfadjoint linear operator on $\H$ such that $e^{is A}H = H$, $s\in\mathbb R$, namely $H$ is $iA$-invariant as in Prop. \ref{inv}.

We shall say that $A$ is {\it active/passive} with respect to $H$ if
\[
\pm(\xi, A \xi) \geq 0\, ,\quad \xi\in D(A)\cap H\, .
\]
$A$ is {\it $n$-active/passive} w.r.t. $H$ if the generator of $e^{itA}\otimes e^{itA}\cdots \otimes e^{itA}$ is active/passive with respect to the $n$-fold tensor product $H\otimes H\otimes\cdots \otimes H$,  (closed real linear span of monomials $h_1\otimes h_2\otimes\cdots \otimes h_n$, $h_i\in H$,
cf. \cite{LMR16}). 
$A$ is {\it completely active/passive} if $A$ is $n$-active/passive for all $n\in\mathbb N$. 

Note that $A$ is active/passive iff 
\[
\pm(\xi, A \xi) \geq 0\, ,\quad \xi\in \cD \, ,
\]
with $\cD\subset D(A)\cap H$ a real linear space such that the closure of $\cD$ in the graph topology of $A$ is equal to $D(A)\cap H$. 

Since $e^{itA}$ leaves $H$ globally invariant, $e^{itA}$ commutes with $\Delta_H^{is}$ and $J_H$. We have 
\ben\label{sym}
J_H\Delta_H J_H = \Delta_H^{-1}\, \quad  J_H AJ_H = -A\, .
\een
 \begin{proposition}
$\log \Delta_H$ is completely passive w.r.t. $H$. 
 \end{proposition}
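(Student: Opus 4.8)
The plan is to reduce complete passivity to the single-particle ($n=1$) statement together with the tensor-product factorisation of the modular data, so that the whole proposition rests on one elementary numerical inequality. First I would establish plain passivity, i.e. $(\xi,\log\Delta_H\,\xi)\le 0$ for $\xi\in D(\log\Delta_H)\cap H$. The key observation is that every $\xi\in H$ lies in $D(\Delta_H^{1/2})$ and is fixed by Tomita's operator, $S_H\xi=\xi$; since $S_H=J_H\Delta_H^{1/2}$ and $J_H$ is an antiunitary involution, this gives $\Delta_H^{1/2}\xi=J_H\xi$, whence
\[
(\xi,\Delta_H\,\xi)=\|\Delta_H^{1/2}\xi\|^2=\|J_H\xi\|^2=\|\xi\|^2 .
\]
Thus $(\xi,(\Delta_H-1)\xi)=0$ for all $\xi\in H$. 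Applying the elementary inequality $\log\lambda\le\lambda-1$ ($\lambda>0$) through the spectral calculus of $\Delta_H$ then yields $(\xi,\log\Delta_H\,\xi)\le(\xi,(\Delta_H-1)\xi)=0$, which is passivity.

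For $\xi\in D(\log\Delta_H)\cap H$ the quantity $(\xi,\log\Delta_H\,\xi)$ is finite, and since $\xi\in H\subset D(\Delta_H^{1/2})$ the quantity $(\xi,\Delta_H\,\xi)=\|\xi\|^2$ is finite as well, so the above inequality is obtained simply by integrating $\log\lambda\le\lambda-1$ against the finite spectral measure $d(\xi,E(\lambda)\xi)$. I note in passing that, because $\log\lambda\le\lambda-1$ controls the positive part of the integrand, the form $(\xi,\log\Delta_H\,\xi)$ is in fact well defined in $[-\infty,0]$ for every $\xi\in H$, so no extra care on the core of the definition of passivity is needed.

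To pass from passivity to complete passivity I would invoke the factorisation of the modular data under tensor products of standard subspaces: for the $n$-fold standard subspace $H^{\otimes n}$ (the closed real linear span of the monomials $h_1\otimes\cdots\otimes h_n$) one has $\Delta_{H^{\otimes n}}=\Delta_H\otimes\cdots\otimes\Delta_H$ and $J_{H^{\otimes n}}=J_H\otimes\cdots\otimes J_H$ (cf. \cite{LMR16}). Consequently the generator of $e^{itA}\otimes\cdots\otimes e^{itA}$, with $A=\log\Delta_H$, is exactly $\log\Delta_{H^{\otimes n}}$. Applying the single-particle argument of the first paragraph verbatim to the standard subspace $H^{\otimes n}$ — i.e. using $(\Xi,\Delta_{H^{\otimes n}}\Xi)=\|\Xi\|^2$ for $\Xi\in H^{\otimes n}$ together with $\log\lambda\le\lambda-1$ — shows that this generator is passive with respect to $H^{\otimes n}$, that is, $\log\Delta_H$ is $n$-passive. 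As this holds for every $n\in\mathbb N$, $\log\Delta_H$ is completely passive.

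The main obstacle I expect is the tensor-product factorisation of the modular operator and conjugation at the level of standard subspaces: although the analogue is classical for von Neumann algebras, one must check that $S_H\otimes\cdots\otimes S_H$ closes to $S_{H^{\otimes n}}$, so that the polar decompositions multiply and $\Delta_{H^{\otimes n}}=\Delta_H^{\otimes n}$, $J_{H^{\otimes n}}=J_H^{\otimes n}$. This is exactly the point on which the reduction to the one-variable estimate hinges; once it is granted, both passivity and its $n$-fold version follow immediately from the identity $(\xi,\Delta\,\xi)=\|\xi\|^2$ on the respective standard subspaces.
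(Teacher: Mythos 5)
Your argument is correct and takes essentially the same route as the paper: single-particle passivity combined with the tensor factorisation $\Delta_{H^{\otimes n}}=\Delta_H\otimes\cdots\otimes\Delta_H$ from \cite{LMR16}. The only difference is that you write out the passivity step (via $S_H\xi=\xi$, hence $(\xi,\Delta_H\xi)=\|\xi\|^2$, and $\log\lambda\le\lambda-1$), which the paper simply cites to \cite{L19}.
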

\begin{proof}
$\log \Delta_H$ is passive w.r.t. $H$, see \cite{L19}. Hence it is completely passive because the modular unitary group of 
$H\otimes H\cdots \otimes H$ is $\Delta_H^{is}\otimes \Delta_H^{is}\cdots \otimes \Delta_H^{is}$, see \cite{LMR16}. 
\end{proof}
 Let $\D_{\rm an} \subset\H$ be the  subspace of vectors with bounded spectrum with respect to both $A$ and $\log\Delta$.
\begin{lemma}\label{pos} If $A$ is passive w.r.t. $H$, then $A\log\Delta_H$ is a positive selfadjoint operator on $\H$. 
\end{lemma}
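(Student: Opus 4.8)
The plan is to diagonalise $A$ and $\log\Delta_H$ simultaneously and to convert passivity into a pointwise sign condition on their joint spectrum. Since $e^{isA}H=H$, by the remark preceding \eqref{sym} the group $e^{isA}$ commutes with $\Delta_H^{it}$, so $A$ and $L:=\log\Delta_H$ strongly commute and possess a joint spectral resolution $E(da\,d\ell)$ on $\mathbb R^2$. This already settles selfadjointness: $A\log\Delta_H=\int_{\mathbb R^2}a\ell\,dE(a,\ell)$ is selfadjoint because $(a,\ell)\mapsto a\ell$ is real-valued. Hence the whole content is positivity, and it suffices to prove $a\ell\ge 0$ $E$-almost everywhere. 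From \eqref{sym}, $J_H$ is antiunitary and implements the reflection $(a,\ell)\mapsto(-a,-\ell)$, i.e. $J_H E(\sigma)J_H=E(-\sigma)$ for every Borel $\sigma\subset\mathbb R^2$ (here $-\sigma=\{(-a,-\ell):(a,\ell)\in\sigma\}$), while $\Delta_H^{1/2}=e^{L/2}$ acts as multiplication by $e^{\ell/2}$.

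Next I would introduce test vectors adapted to $H$. For a bounded Borel set $\sigma$ with $\sigma\cap(-\sigma)=\varnothing$ and any $\psi=E(\sigma)\psi$, set $\xi=\psi+J_H\Delta_H^{1/2}\psi$. Using $J_H\Delta_H^{1/2}J_H=\Delta_H^{-1/2}$ one checks $S_H\xi=J_H\Delta_H^{1/2}\xi=\xi$, so $\xi\in H$, and $\xi\in D(A)$ since $\sigma$ is bounded. The summands $\psi$ and $J_H\Delta_H^{1/2}\psi$ are spectrally supported in $\sigma$ and $-\sigma$ respectively; as $A$ preserves spectral supports and $\sigma\cap(-\sigma)=\varnothing$, the two mixed terms in $(\xi,A\xi)$ vanish. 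Using $AJ_H=-J_H A$ and the antiunitarity of $J_H$ (so that $(J_H\eta,AJ_H\eta)=-(\eta,A\eta)$), the two surviving terms give, with $\mu_\psi=(\psi,E(\cdot)\psi)$,
\[
(\xi,A\xi)=\int_\sigma a\,d\mu_\psi-\int_\sigma a\,e^{\ell}\,d\mu_\psi=\int_\sigma a\,(1-e^{\ell})\,d\mu_\psi .
\]

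Passivity (in the convention of the paragraph defining active/passive, i.e. $(\xi,A\xi)\le 0$ for $\xi\in D(A)\cap H$) then forces $\int_\sigma a(1-e^{\ell})\,d\mu_\psi\le 0$ for every $\psi\in\ran E(\sigma)$. Since $\mu_\psi$ is an arbitrary measure absolutely continuous with respect to $E|_\sigma$, this yields $a(1-e^{\ell})\le 0$ $E$-a.e. on $\sigma$. Every point $(a_0,\ell_0)\neq(0,0)$ has such a neighbourhood (its reflection is distinct), so $a(1-e^{\ell})\le 0$ $E$-a.e.; as $1-e^{\ell}$ carries the sign of $-\ell$, this says $a$ and $\ell$ have equal sign $E$-a.e., i.e. $a\ell\ge 0$ $E$-a.e. (trivially also where $\ell=0$). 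Therefore $A\log\Delta_H=\int a\ell\,dE\ge 0$.

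The main obstacle is the localisation step of the third paragraph, where the freedom in choosing $\sigma$ and $\psi$, combined with the disjoint-support cancellation of the cross terms, is exactly what upgrades the integral inequality to the pointwise $E$-a.e. inequality; verifying $\xi=\psi+J_H\Delta_H^{1/2}\psi\in H$ with the correct domain is the other point requiring care. Finally, the sign bookkeeping ($J_HLJ_H=-L$ together with the passive convention $(\xi,A\xi)\le 0$) must be tracked precisely, since it is what produces $\ge 0$ rather than $\le 0$; a reassuring consistency check is that running the same computation with $A=\log\Delta_H$ reproduces the complete passivity of $\log\Delta_H$ already recorded above.
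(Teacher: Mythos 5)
Your proof is correct and follows essentially the same route as the paper's: both test passivity on vectors of the form $(1+S_H)\eta\in H$, use $J_HAJ_H=-A$ to turn $(S_H\eta,AS_H\eta)$ into $-(\eta,A\Delta_H\eta)$, and conclude via the sign equivalence $a(1-e^{\ell})\le 0\Leftrightarrow a\ell\ge 0$ on the joint spectrum of $A$ and $\log\Delta_H$. The only difference is technical: the paper cancels the cross terms by summing the passivity inequalities for $\xi$ and $i\xi$ with $\xi\in\D_{\rm an}$, obtaining the operator inequality $A(1-\Delta_H)\le 0$ on a core, whereas you make the cross terms vanish individually by localising $\psi$ in a set $\sigma$ with $\sigma\cap(-\sigma)=\varnothing$ and then upgrade to the pointwise $E$-a.e.\ inequality by a covering argument.
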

\begin{proof}
Denote by $S_H$ the Tomita operator of $\H$. 
As $S_H$ commutes with $e^{itA}$, $\Delta_H^{is}$ and $J_H$, we have $S_H \D_{\rm an}  = \D_{\rm an}$.
Let $\xi \in\D_{\rm an}$, so $(1+S_H)\xi\in H$. Then, by passivity,
\ben\label{S1}
0\geq ((1 + S_H)\xi, A(1+ S_H)\xi ) \ ,
\een
thus
\ben\label{S2}
0\geq ((1 + S_H)i\xi, A(1+ S_H)i\xi ) = ((1 - S_H)\xi, A(1- S_H)\xi )\ .
\een
Summing up \eqref{S1} and \eqref{S2} we get
\begin{multline*}
0\geq  (\xi,A\xi) + (S_H\xi, A S_H\xi) = (\xi,A \xi) + (J_H\Delta_H^{1/2} \xi,  AJ_H\Delta_H^{1/2} \xi) 
= (\xi, A\xi) - (J_H\Delta_H^{1/2} \xi,  J_HA\Delta_H^{1/2} \xi)  \\
 =  (\xi,A \xi) - (A\Delta_H^{1/2} \xi,  \Delta_H^{1/2} \xi)  = (\xi, A(1-\Delta_H)\xi)\, ,
 \end{multline*}
thus $A(1-\Delta_H) \leq 0$ because $\D_{\rm an}$ is a core for $A(1-\Delta_H)$. 
But $A(1-\Delta_H) \leq 0$ is equivalent to $A\log\Delta_H \geq 0$.  
\end{proof}
We shall say that a standard subspace is {\it abelian} if $\Delta_H =1$, see \cite{LN}. 
\begin{theorem}\label{Cpass}
$A$ is completely active with respect to $H$ iff $\log\Delta_H = \l A$ for some $\l \leq 0$. 
\end{theorem}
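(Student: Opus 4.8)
The plan is to prove both implications, the reverse (``$\Leftarrow$'') being essentially a restatement of the complete passivity of $\log\Delta_H$ established above, and the forward (``$\Rightarrow$'') carrying the real content. For the easy direction, suppose $\log\Delta_H=\lambda A$ with $\lambda\le 0$. If $\lambda<0$, then $A=\lambda^{-1}\log\Delta_H$ with $\lambda^{-1}<0$, and the generator of $e^{itA}\otimes\cdots\otimes e^{itA}$ on $\H^{\otimes n}$ is $\lambda^{-1}\log\Delta_{H^{\otimes n}}$, since the modular unitary group of $H^{\otimes n}$ is $\Delta_H^{is}\otimes\cdots\otimes\Delta_H^{is}$. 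As $\log\Delta_{H^{\otimes n}}$ is passive w.r.t.\ $H^{\otimes n}$, its negative multiple $\lambda^{-1}\log\Delta_{H^{\otimes n}}$ is active there, so $A$ is $n$-active for every $n$. If $\lambda=0$ then $\Delta_H=1$, and using $J_HAJ_H=-A$ from \eqref{sym} together with $S_H=J_H$ one checks $(\xi,A\xi)=0$ for $\xi\in H$ (and likewise $(\eta,A^{(n)}\eta)=0$ on each $H^{\otimes n}$, the abelian property being preserved under tensoring), so $A$ is trivially completely active.

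For the hard direction assume $A$ completely active. Since $e^{itA}H=H$, the group $e^{itA}$ commutes with $\Delta_H^{is}$, so $A$ and $B:=\log\Delta_H$ are commuting self-adjoint operators; let $\Sigma\subset\mathbb R^2$ be the support of their joint spectral measure, with coordinates $(a,b)$. From \eqref{sym} we get $J_HAJ_H=-A$ and $J_HBJ_H=-B$, so $\Sigma$ is invariant under $(a,b)\mapsto(-a,-b)$. Applying Lemma \ref{pos} to $-A^{(n)}$ on $H^{\otimes n}$ (whose modular Hamiltonian is $\sum_i B_i$ and whose dynamics is generated by $\sum_i A_i$) yields the operator inequality $\big(\sum_i A_i\big)\big(\sum_i B_i\big)\le 0$ on $\H^{\otimes n}$. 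Since the joint spectral measure of $(A_1,B_1,\dots,A_n,B_n)$ is supported on $\Sigma^n$ and the symbol is continuous, this is equivalent to the scalar inequalities
\[
\Big(\sum_{i=1}^n a_i\Big)\Big(\sum_{i=1}^n b_i\Big)\le 0\, ,\qquad (a_i,b_i)\in\Sigma\, .
\]
For $n=1$ this forces $ab\le 0$ on $\Sigma$, i.e.\ $\Sigma$ lies in the second and fourth closed quadrants.

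The use of \emph{all} $n$ is what upgrades this sign condition to proportionality. Fix $p=(a_p,b_p)$ and $q=(a_q,b_q)$ in $\Sigma$, take $k$ copies of $p$ and $l$ of $q$ (so $n=k+l$), divide by $k^2$ and set $t=l/k\ge 0$: the inequalities become $f(t):=(a_p+ta_q)(b_p+tb_q)\le 0$ for all rational, hence all real, $t\ge 0$. If $p$ is in the open second quadrant and $q$ in the open fourth, then $f$ is a downward parabola ($a_qb_q<0$) whose two roots $-a_p/a_q$ and $-b_p/b_q$ are both strictly positive, so $f>0$ strictly between them unless they coincide; thus $f\le 0$ on $[0,\infty)$ forces $a_pb_q=a_qb_p$, i.e.\ $p$ and $q$ are collinear with the origin. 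Using $\Sigma=-\Sigma$ to pass between the two quadrants, every off-axis point of $\Sigma$ then lies on a single line $b=\lambda a$ with $\lambda<0$. The same parabola test, pairing a fourth-quadrant point with a prospective axis point (and invoking the symmetry), excludes any mass on the axes apart from the origin, so $\Sigma\subset\{b=\lambda a\}$ and $\log\Delta_H=\lambda A$ with $\lambda<0$.

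It remains to dispose of the purely degenerate configurations. If $\Sigma$ has no off-axis point, the pairing test forbids simultaneous nonzero mass on both axes, leaving either $\Sigma\subset\{b=0\}$, i.e.\ $\Delta_H=1$ and $\log\Delta_H=0=0\cdot A$ ($\lambda=0$), or the exceptional $\Sigma\subset\{a=0\}$, i.e.\ $A=0$, which agrees with the statement precisely in the abelian case. I expect the main obstacle to be the forward direction, specifically two points: justifying rigorously the passage from the operator inequalities on the tensor powers to the pointwise inequalities on $\Sigma^n$ (requiring care with the unboundedness of $A$ and $B$ and with the support of the product spectral measure), and recognizing that the combinatorial freedom in $n$ is exactly what renders the elementary ``two coincident positive roots'' argument decisive. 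This is the standard-subspace incarnation of the Pusz--Woronowicz mechanism, in which complete, rather than mere, activity is indispensable.
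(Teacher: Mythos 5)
Your proof is correct and follows essentially the same route as the paper: apply Lemma \ref{pos} to the tensor powers, constrain the joint spectrum of $A$ and $\log\Delta_H$ by the resulting sign condition on all finite integer combinations of spectral points, and deduce collinearity through the origin. You in fact carry the argument out more carefully than the paper does --- your signs are consistent with the statement (the paper's proof writes $ab\ge 0$ and concludes $A=\l\log\Delta_H$ with $\l>0$, a slip relative to $\l\le 0$ in the theorem), the parabola computation makes the collinearity step explicit, and you correctly isolate the degenerate configuration $\Sigma\subset\{a=0\}$ (i.e.\ $A=0$ with $\Delta_H\neq 1$), which the paper's phrase ``if $Z$ is vertical then $\log\Delta_H=0$'' glosses over.
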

\begin{proof}
Assume that $A$ is completely active with respect to $H$. 
Let $\Lambda\subset \mathbb R^2$ be the joint spectrum of $A$ and $\log\Delta_H$.  
By Lemma \ref{pos}, $\Lambda$ is contained in the region $Q=\{ (a,b)\in\mathbb R^2: \, ab \geq 0\}$. By \eqref{sym} we have $-\Lambda = \Lambda$ and by
complete passivity we have $\Lambda + \Lambda + \cdots +\Lambda\subset Q$ (finite sum). 

Let $C_1, C_2$ be two different points in $\Lambda$. Then $n_1 C_1 + n_2 C_2\in Q$ for all integers $n_1,n_2\in\Z$. 
So $C_1$ and $C_2$ must belong to a same straight line $Z$ through the origin $(0,0)$. Thus $\Lambda\subset Z$. 
If $Z$ is  vertical, then  $\log\Delta_H = 0$, namely $H$ is abelian, so $\log\Delta_H = \l A$ with $\l=0$. 
If $Z$ is not vertical then $A = \l \log\Delta_H$ with $\l > 0$.  

For the converse, it remains to show that $A$ is completely active if $\Delta_H =1$, namely if $H$ is abelian. In this case, the scalar product of $\H$ is real on $H$, and $iH$ is the real orthogonal complement of $H$. 
As $A$ maps $H$ into $iH$, we have $\Re(h, A h) =0$ for $h\in D(A)\cap H$, thus $(h, A h) =0$ because $A$ is selfadjoint. So $A$ is active, thus completely active by repeating this argument for $H\otimes H\cdots \otimes H$. 
\end{proof}

\section{Preliminaries on the waves' space}
\label{pre}
Let $\S$ denote the real linear space of smooth, compactly supported real functions on $\mathbb R^d$. We shall always assume $d\geq 2$ unless otherwise specified. 

As is known, if $f, g\in\S$, there is a unique smooth real function $\Phi$ on $\mathbb R^{d+1}$ which is a solution $\Phi$ of the wave equation 
\[
\square\Phi \equiv \partial^2 \Phi/\partial x_0^2 - \partial^2 \Phi/\partial x_1^2 \cdots - \partial^2 \Phi/\partial x_d^2= 0
\]
(a wave packet or, briefly, a wave) with Cauchy data $\Phi |_{x_0 =0} = f$, $\partial_0\Phi |_{x_0 =0} = g$. 
We set $\Phi=w(f,g)$ and denote
by $\T$ the real linear space of these $\Phi$'s; we will often use the identification
\ben\label{st}
\S^2 \longleftrightarrow \T\, , \qquad \langle f,g\rangle \longleftrightarrow w(f, g)\, .
\een  
The one particle Hilbert space is $\H= L^2\big( {\mathfrak H}_0 , \delta(p^2)\big)$
with $\mathfrak H_0$ the positive massless hyperboloid, namely $\mathfrak H_0$ is the boundary of the forward light cone.  We denote by $(\cdot,\cdot)$ the scalar product of $\H$. The Fourier transform of a $\Phi \in \T$ is a distribution of the form $\hat \Phi(p) = \delta(p^2) F(p)$ with a compactly supported smooth function $F : {\mathfrak H}_0 \cup (-{\mathfrak H}_0)\to \bC$, so that
$\T$ real linearly embeds into $\H$ by 
$\Phi \mapsto (2\pi)^{d/2}F |_{\mathfrak H_0}$. We may thus consider $\T$ as a dense subset of $\H$.

Consider the symplectic form on $\T$
\ben\label{beta}
\b(\Phi,\Psi) = \frac12\int_{x_0 =0} \big(\Psi\partial_0 \Phi - \Phi\partial_0 \Psi) dx\,.
\een
This is the imaginary part of the restriction of the scalar product of $\H$ to $\T$:
\[
\Im(\langle f_1,g_1\rangle , \langle f_2 ,g_2 \rangle) = \b(\langle f_1,g_1\rangle , \langle f_2 ,g_2 \rangle)\ .
\]
We denote by $H_0^{s}$ the  real Hilbert space of real-valued tempered distributions $f \in S'(\bR^d)$ such that 
\ben\label{H12}
||f ||_{s}^2  = \int_{\RR^d} {|\bp|^{2s}} | \hat f(\bp)|^2 d\bp< +\infty\, ,\quad s\in\RR\, .
\een
It is clear that $\S$ is dense in $H^{\pm 1/2}$ and that $\mu : H_0^{1/2} \to H^{-1/2}$ with
\ben\label{mum}
\widehat{\mu f}({\bm p}) = {{|\bm p|}}\hat f({\bm p})\, .
\een
is a unitary operator. 

Then
\ben\label{imum}
\imath_0 = \left[\begin{matrix}
0 & \mu^{-1} \\ -\mu  & 0  
\end{matrix}\right]\, ,
\een
namely $\imath_0\langle f,g\rangle = \langle \mu^{-1} g,  -\mu f\rangle$, 
is a unitary operator $\imath_0$ on $H =  H^{1/2}\oplus H^{-1/2}$.

As $\imath_0^2 = -1$,  the unitary $\imath_0$ defines a complex structure (multiplication by the imaginary unit)  on
$H_0$ that becomes a complex Hilbert space with scalar product
\[
(\Phi,\Psi) = \b(\Phi, \imath_0 \Psi) + i\b(\Phi, \Psi) \ .
\]
In terms of the Cauchy data, the scalar product is given by
\[
( \langle f,g\rangle, \langle h,k \rangle ) = \frac1 2\big((f,\mu h)+(g,\mu^{-1}k) + i [(h, g)-(f,k)]\big), \qquad \langle f,g\rangle,\langle h,k\rangle \in H^{1/2}\oplus H^{-1/2},
\]
(where $(\cdot, \cdot)$ is the standard $L^2$ scalar product).
Furthermore the map
\[
\langle f,g\rangle \in H \mapsto w(f,g) \in \H
\]
is isometric and complex linear, so it extends to a unitary operator. We will then identify these spaces; namely, the one particle Hilbert space is 
\[
\H = \text{real Hilbert space}\  H \ \text{with complex structure given by} \ \imath_0\, .
\]
Note that
$H^{1/2}$ and $H^{- 1/2}$ are naturally a dual pair under the bilinear form
\ben\label{dual}
\langle f,g\rangle \in H^{1/2}\times H^{-1/2} \mapsto \int_{\bR^d} \bar{\hat f}\hat g\ .
\een
Let $Z$ be an open, non-empty subset of $\mathbb R^{d}$ and denote by $Z'$ the interior of its complement. 
We shall denote by $H(Z)$ the closed, real linear subspace of $\H$
\[
H(Z) = \{\Phi\in\T(Z)\}^-\, ,
\]
where
\[
\T(Z) = \{\Phi = w(f,g)\in\T : {\rm supp}(f),\, {\rm supp}(g)\subset Z\}\, .
\]
If $Z$ and $Z'$ are non-empty, then $H(Z)$ is a standard subspace of $\H$, the one associated with $Z$. We shall be mainly interested in the case $Z$ or $Z'$ is the unit ball $B$ of $ \mathbb R^d$. Clearly, setting
\[
H^{\pm 1/2}(Z) = \text{closure of $C_0^\infty(Z)$ in $H^{\pm 1/2}$}\, ,
\]
we have
\[
H(Z) = H^{1/2}(Z)\oplus H^{-1/2}(Z)\, .
\]
Here and in the following, $C^{\infty}_0(Z)$ denotes the space of real $C^\infty$ function on $\mathbb R^d$ with compact support in $Z$.
By duality \cite{Ar63}, see also \cite{EO,LRT}, we have
\begin{equation}\label{duality}
H(B)' = H(B')\, .
\end{equation}

\section{Massless Hamiltonian}\label{MasslessH}
We now compute the formula for the massless Hamiltonian. 
\subsection{The modular group} 
In the following, $O$ will denote the double cone on the Minkowski spacetime $\mathbb R^{d+1}$ with base the open unit ball $B$ centered at the origin in the time zero hyperplane $\mathbb R^d$. As above,  $H(O)=H(B)$ is the standard subspace in the massless, scalar one-particle Hilbert space $\H$. We assume $d\geq 2$. 

The modular group $\Delta_B^{is}$ associated with $H(B)$ has been computed in \cite{HL} in terms of the action on the field or, equivalently, on the spacetime test functions. $\Delta_B^{is}$ is associated with a one-parameter group of conformal transformation that preserves $O$, namely
\ben\label{Z}
(u,v) \mapsto \big((Z(u,s), Z(v,s)\big)\, ,
\een
where $Z$ is given by
\[
Z(z,s) = \frac{g(z, s)}{f(z, s)}
\]
with
\[
f(z, s) = \frac{(1+ z) + e^{-s}(1-z)}{2}  \ ,\quad g(z, s) = \frac{(1+ z) - e^{-s}(1- z)}{2} \ .
\]
Here we need to compute $\Delta_B^{is}$ in terms of waves.   
Let $\Phi$ be a wave and set
\ben\label{V}
(V(s)\Phi)(u, v) = \g(u,v;s)\Phi\big(Z(u,s), Z(v,s)\big)\, ,
\een
with  
\[
 u = x_0 + r, \quad v = x_0 - r, \quad r = |\bx| \equiv \sqrt{x_1^2 + \cdots+ x_d^2}
\]
and we omit the remaining spherical coordinates as the action is trivial on them. 

The cocycle $\g$ given by
\[
\g(u,v;s) = F(u, s)F(-v, -s), \quad F(z,s) \equiv f^{-D}(z,s) \ ,
\]
with
\ben\label{D}
D = \frac{d-1}{2}\, ;
\een
$D$ is the scaling dimension. 
We have:
\begin{theorem}\label{MG}
The modular group of $H(B)$ is given by
\[
\Delta_B^{-is} = V(2\pi s)\, .
\]
\end{theorem}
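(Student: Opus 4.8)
The plan is to take the geometric description of the modular flow already established in \cite{HL} and transport it from the field (equivalently, test--function) picture to the wave picture. By \cite{HL}, the modular group of $H_0(B)$ --- equivalently, the modular automorphism group of $\mathcal A(O)$ --- is implemented by the one--parameter group $\Lambda_s$ of conformal transformations preserving $O$, acting on the field as a weighted pullback of scaling dimension $D=(d-1)/2$. Since the vacuum modular unitary is the second quantisation of a one--particle unitary, it suffices to identify that one--particle unitary, acting on $\T_0\subset\H_0$, with $V(2\pi s)$. Under the identification \eqref{st} of waves with their Cauchy data, a solution $\Phi$ is the image under the causal propagator of a spacetime test function; because the propagator is conformally covariant with weight $D$, the weighted pullback on test functions corresponds precisely to the weighted pullback $V(s)$ on solutions. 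So I would reduce the theorem to showing that $V(s)$, as written, is a well--defined one--parameter unitary group implementing the flow $\Lambda_{2\pi s}$.

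First I would check that $V$ is a one--parameter group on the space of waves. The radial action $z\mapsto Z(z,s)$ is a flow: from the explicit Möbius form one verifies $Z(Z(z,s_1),s_2)=Z(z,s_1+s_2)$, equivalently that it is generated by the conformal Killing field of $O$ in the null coordinates $u,v$. The cocycle $\gamma(u,v;s)=F(u,s)F(-v,-s)$, with $F=f^{-D}$, then satisfies
\[
\gamma(u,v;s_1+s_2)=\gamma(u,v;s_1)\,\gamma\big(Z(u,s_1),Z(v,s_1);s_2\big),
\]
which follows from the multiplicativity of $f$ under composition of the $Z$'s (the chain rule for the associated Jacobian). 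Combining the two yields $V(s_1)V(s_2)=V(s_1+s_2)$. Next I would verify that $V(s)$ maps waves to waves: the massless Klein--Gordon operator $\square$ is conformally covariant, and the weight $f^{-D}$ with $D=(d-1)/2$ is exactly the one for which $\square\big(V(s)\Phi\big)=0$ whenever $\square\Phi=0$. Spherical symmetry of $\Lambda_s$ reduces the spacetime action to the two null variables $u,v$, while the exponent $D$ records the full spacetime dimension $d+1$.

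It then remains to see that $V(s)$ is unitary for $(\cdot,\cdot)_0$ and that $V(2\pi s)$ is the modular group. For unitarity I would argue that the symplectic form \eqref{beta} is invariant under the conformal flow, being the conserved symplectic form on solution space and $\Lambda_s$ a conformal symmetry with the correct weight; hence $V(s)$ preserves $\Im(\cdot,\cdot)_0=\b$. Moreover $\Lambda_s$ lies in the conformal group, whose vacuum (positive--energy) representation acts by unitaries commuting with the complex structure $\imath_0$, so $V(s)$ preserves $\Re(\cdot,\cdot)_0$ as well and extends to a unitary one--parameter group on $\H_0$. Finally, since $\Lambda_s$ preserves $O$, $V(s)$ leaves $H_0(B)$ globally invariant, and by \cite{HL} the geometric flow realising the modular group is exactly $\Lambda_{2\pi s}$; transported to waves this is $V(2\pi s)$. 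As the one--particle modular unitary is determined by its geometric action here, I conclude $\Delta_B^{-is}=V(2\pi s)$.

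The main obstacle will be getting the conformal weight right: one must check that the single exponent $D=(d-1)/2$ simultaneously makes $V(s)$ carry solutions of $\square\Phi=0$ to solutions \emph{and} makes $V(s)$ isometric for the one--particle scalar product --- equivalently, that the classical wave solutions transform as densities of the scaling dimension of the free massless field, compatibly with the positive--energy unitary representation of the conformal group. Verifying this compatibility, together with the intertwining through the causal propagator that produces the $2\pi$ (KMS) normalisation, is the crux; the group law and the conformal covariance of $\square$ are then bookkeeping with the explicit $Z$ and $\gamma$.
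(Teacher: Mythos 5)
Your proposal is correct and its skeleton coincides with the paper's: both arguments take from \cite{HL} that $\Delta_B^{-is/2\pi}$ acts by the conformal flow $Z$ with a multiplier of the form $f^{-D}$ for some $D>0$, and then reduce the theorem to pinning down the exponent $D$. Where you genuinely diverge is in how $D$ is fixed. You determine $D=(d-1)/2$ by the conformal covariance of $\square$ (so that $V(s)$ maps solutions to solutions) together with the unitarity and complex linearity of the positive-energy one-particle representation of the conformal group; this is essentially the ``direct calculation'' route the paper explicitly acknowledges as an alternative. The paper instead fixes $D$ a posteriori: the generator of the modular unitary group must be complex linear, i.e.\ $K_0=-\imath_0 K_0\imath_0$, equivalently $\mu_0 M_0\mu_0=-L_0$ (Corollary \ref{Ki}), and this identity holds for exactly one value of $D$. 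The paper's argument is more economical --- it uses only the tautological complex linearity of $\log\Delta_B$ rather than the conformal covariance of the free field --- and it is the mechanism that persists under the massive deformation later on (Proposition \ref{Ki2}), whereas conformal covariance does not. Your route imports more structure but is self-contained within the massless conformal theory; the one point to keep explicit is that conformal covariance of the wave equation by itself only yields symplectic invariance of $V(s)$, and the compatibility with the complex structure $\imath_0$ (which you correctly source from the positive-energy representation) is genuinely needed to conclude unitarity.
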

\begin{proof}
That $\Delta_B^{-is/2\pi}$ is associated with the flow $Z$ \eqref{Z} and the cocycle $\g$ follows rather directly from the test function formula \cite{HL}, but for the determination of the value of the constant $D$. Now, as in \cite{HL}, $H(B)$ is equivalent to $H(W)$, with $W$ a wedge region, by a unitary operator that is obtained by composing unitaries associated with translations and ray inversion map. The cocycle is associated with the Jacobian of the ray inversion map. Thus, by computation similar to those in \cite{HL}, the cocycle is of the form $f^{-D}$ for some constant $D>0$. 

However, it will follow from Lemma \ref{Ki} that $D = (d-1)/2$ is the only value compatible with the complex linearity of the generator of $V$. 
Of course, $D$ can be determined by direct calculations too. 
\end{proof}

\subsection{The generator of the modular group}
We now compute the $K = \frac{d}{ds}V(s)\big|_{s=0}$, the generator of $V$ in Theorem \ref{MG}, that is proportional to the modular Hamiltonian. 

With the above notations, denoting by a prime the derivative with respect to $s$-parameter, we have (Identity \ref{zg})
\[
Z(z,0) = z\, , \qquad Z'(z,0) = (1-z^2)/2 
\]
and
\[
\g(u,v; 0) = 1\, , \qquad\g'(u,v;0) = -\frac{D}2(u +v) = -D\,x_0\ .
\]
Therefore
\begin{multline*}
(V(s)\Phi)(u,v)'  = \g'(u,v;s)\Phi\big(Z(u,s), Z(v,s)\big) \\
+  \g(u,v;s)\Big(\partial_u \Phi\big(Z(u,s), Z(v,s)\big)Z'(u,s) +
\partial_v \Phi\big(Z(u,s), Z(v,s)\big)Z'(v,s)\Big)\, .
\end{multline*}
We then have:
\begin{proposition}We have
\[
(K \Phi)(u,v) =  -\frac{D}2(u +v)\Phi 
 -\frac12 u^2\partial_u \Phi -\frac12 v^2\partial_v \Phi 
 + \frac12 \partial_u \Phi + \frac12 \partial_v \Phi\ .
\]
In terms of the $x_0, r$ coordinates
\ben\label{K}
(K \Phi)(x_0,r)
= \frac12\big(1 -(x^2_0 + r^2)\big) \partial_0 \Phi - x_0r \partial_r\Phi - D\, x_0\Phi \ .
\een
\end{proposition}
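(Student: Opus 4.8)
The plan is to read off $K_0=\frac{d}{ds}V(s)\big|_{s=0}$ directly from the differentiated action $(V(s)\Phi)(u,v)'$ displayed just above the statement, by setting $s=0$, and then to rewrite the result in the coordinates $x_0,r$. Since the genuine analytic content — the explicit flow $Z$, the cocycle $\g$, and their $s$-derivatives at the origin — has already been supplied, both steps are essentially bookkeeping.

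First I would evaluate the three summands of $(V(s)\Phi)(u,v)'$ at $s=0$, using $Z(z,0)=z$, $Z'(z,0)=\frac12(1-z^2)$, $\g(u,v;0)=1$ and $\g'(u,v;0)=-\frac D2(u+v)$ recorded above. As $\Phi(Z(u,0),Z(v,0))=\Phi(u,v)$, the derivative collapses to
\[
(K_0\Phi)(u,v) = -\frac D2(u+v)\Phi + \frac{1-u^2}2\,\partial_u\Phi + \frac{1-v^2}2\,\partial_v\Phi\, ,
\]
and distributing the coefficients $\frac12(1-u^2)$ and $\frac12(1-v^2)$ gives exactly the first displayed formula.

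Next I would pass to the coordinates $x_0=\frac12(u+v)$, $r=\frac12(u-v)$, for which the chain rule yields
\[
\partial_u = \frac12(\partial_0+\partial_r)\, ,\qquad \partial_v = \frac12(\partial_0-\partial_r)\, .
\]
Substituting, the zeroth-order term becomes $-\frac D2(u+v)\Phi=-D\,x_0\Phi$, the constant-coefficient terms combine to $\frac12\partial_u\Phi+\frac12\partial_v\Phi=\frac12\partial_0\Phi$, and the quadratic terms give $-\frac12u^2\partial_u\Phi-\frac12v^2\partial_v\Phi = -\frac12(x_0^2+r^2)\partial_0\Phi - x_0r\,\partial_r\Phi$. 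Adding the three contributions reproduces \eqref{K}.

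The computation carries no real obstacle; it is entirely the bookkeeping of this linear change of variables. The only point that repays care is the quadratic term, where one uses $u^2+v^2=2(x_0^2+r^2)$ and $u^2-v^2=4x_0r$: the symmetric part multiplies $\partial_0$ and the antisymmetric part multiplies $\partial_r$, which is what produces the cross term $-x_0r\,\partial_r\Phi$ and, together with the constant term $\frac12\partial_0\Phi$, the parabolic coefficient $\frac12(1-(x_0^2+r^2))$ of $\partial_0\Phi$.
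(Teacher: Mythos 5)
Your computation is correct and follows exactly the paper's own route: evaluate the differentiated action at $s=0$ using $Z(z,0)=z$, $Z'(z,0)=\tfrac12(1-z^2)$, $\g(u,v;0)=1$, $\g'(u,v;0)=-\tfrac{D}{2}(u+v)$, then convert to $x_0,r$ via $\partial_u=\tfrac12(\partial_0+\partial_r)$, $\partial_v=\tfrac12(\partial_0-\partial_r)$. The splitting of the quadratic term into symmetric and antisymmetric parts is precisely the step the paper performs in its final multline.
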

\begin{proof}
We compute:
\begin{align*}
(K \Phi)(u,v) & = (V(s)\Phi)(u,v)'\big|_{s =0} \\  & = \g'(u,v;0)\Phi(u, v)  + \partial_u \Phi(u, v)Z'(u,0) +
\partial_v \Phi(u, v)Z'(v,0)\\
& =  -\frac{D}2(u +v)\Phi(u,v) +
 \frac12\partial_u \Phi(u, v)(1-u^2) +
\frac12\partial_v \Phi(u, v)(1-v^2) \\
& =  -\frac{D}2(u +v)\Phi 
 -\frac12 u^2\partial_u \Phi -\frac12 v^2\partial_v \Phi 
 + \frac12 \partial_u \Phi + \frac12 \partial_v \Phi\ .
\end{align*}
Now
\[
\partial_u = \frac12(\partial_0  + \partial_{r}),\quad \partial_v = \frac12(\partial_0  - \partial_{r})\, ,
\]
so
\begin{multline}
(K \Phi)(x_0,r)
=   -\frac14\big((x_0 + r)^2 (\partial_{0} + \partial_{r})\Phi  +(x_0 -r)^2 (\partial_0  - \partial_{r})\big) 
\Phi -{D}x_0\Phi + \frac12\partial_0\Phi
\\
=  -\frac12(x^2_0 + r^2) \partial_0 \Phi - x_0r \partial_r\Phi -{D}x_0\Phi + \frac12\partial_0\Phi \\
= \frac12\big(1 -(x^2_0 + r^2)\big) \partial_0 \Phi - x_0r \partial_r\Phi - D\,x_0\Phi \ .
\end{multline}
\end{proof}
\begin{corollary}\label{KK'}
We have
\begin{gather}
(K \Phi)|_{x_0 = 0}
= \frac12(1 - r^2)\partial_0 \Phi |_{x_0 =0} \, ,\\
(\partial_0  K \Phi)|_{x_0 = 0} = \frac12(1 - r^2)\nabla^2\Phi - r\partial_r\Phi  -D\Phi |_{x_0 = 0} \, . \label{K2}
\end{gather}
\end{corollary}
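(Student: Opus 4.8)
The plan is to read both identities directly off the explicit expression for $K_0\Phi$ furnished by the preceding Proposition,
\[
(K_0 \Phi)(x_0,r) = \frac12\big(1 -(x^2_0 + r^2)\big) \partial_0 \Phi - x_0 r \partial_r\Phi - D\, x_0\Phi\, ,
\]
differentiating in $x_0$ where the statement demands it and invoking the equation of motion only at the very end.

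For the first identity I would simply restrict to the time-zero slice. The two terms $-x_0 r\,\partial_r\Phi$ and $-D x_0\Phi$ carry an explicit factor of $x_0$ and hence vanish at $x_0=0$, while the coefficient $\frac12\big(1-(x_0^2+r^2)\big)$ reduces to $\frac12(1-r^2)$. This yields $(K_0\Phi)|_{x_0=0} = \frac12(1-r^2)\partial_0\Phi|_{x_0=0}$ with no computation beyond substitution.

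For the second identity I would apply $\partial_0$ to the full expression by the product rule, obtaining
\[
\partial_0(K_0\Phi) = -x_0\,\partial_0\Phi + \frac12(1-x_0^2-r^2)\,\partial_0^2\Phi - r\,\partial_r\Phi - x_0 r\,\partial_0\partial_r\Phi - D\Phi - D x_0\,\partial_0\Phi\, ,
\]
and then set $x_0=0$. Every term still containing an explicit $x_0$ (namely $-x_0\partial_0\Phi$, the mixed term $-x_0 r\,\partial_0\partial_r\Phi$, and $-D x_0\partial_0\Phi$) drops out, leaving $\frac12(1-r^2)\,\partial_0^2\Phi - r\,\partial_r\Phi - D\Phi$ evaluated at $x_0=0$.

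The one genuine input is the equation of motion: since $\Phi$ is a massless wave, $\square\Phi = \partial_0^2\Phi - \nabla^2\Phi = 0$, so $\partial_0^2\Phi = \nabla^2\Phi$ identically, in particular on the time-zero slice. Substituting this converts the $\partial_0^2\Phi$ term into $\frac12(1-r^2)\nabla^2\Phi$ and produces exactly \eqref{K2}. There is no real obstacle here; the only points requiring care are the bookkeeping in the product rule and the observation that the mixed derivative term vanishes at $x_0=0$ rather than needing to be carried along.
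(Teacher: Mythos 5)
Your proof is correct and follows essentially the same route as the paper: restrict \eqref{K} to the time-zero slice for the first identity, differentiate in $x_0$ and restrict for the second, then use the massless wave equation $\partial_0^2\Phi=\nabla^2\Phi$ to obtain \eqref{K2}. You are merely more explicit about the product-rule bookkeeping and the invocation of the equation of motion, which the paper leaves implicit.
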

\begin{proof}
The first equality follows immediately from \eqref{K}. 
Again from \eqref{K} we have
\[
\partial_0  K \Phi|_{x_0 = 0}  =  \frac12(1 - r^2)\partial^2_{0}\Phi - r\partial_r\Phi  -D\Phi |_{x_0 = 0} 
\]
thus \eqref{K2} holds. 
\end{proof}
The above corollary translates into the following:
\begin{proposition}\label{Kprop}
We have
\[
K : w( f, g)   \mapsto w \Big( \frac12(1 - r^2)g,   \frac12(1 - r^2)\nabla^2 f - r\partial_r f -D\,f\Big)\, .
\]
In other words, $K$ is the operator on $\S^2$ given by the $2\times 2$ matrix
\ben\label{Kmat}
K  = \left[
\begin{matrix}
0 & \frac12(1 - r^2) \\ \frac12(1 - r^2)\nabla^2 - r\partial_r  - D & 0
\end{matrix}\right] \, .
\een
\end{proposition}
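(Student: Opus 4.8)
The plan is to obtain Proposition \ref{Kprop} as a direct translation of Corollary \ref{KK'} into the language of Cauchy data. Under the identification \eqref{st}, a wave $\Phi = w_0(f,g)\in\T_0$ is completely determined by the pair $\big(\Phi|_{x_0=0},\partial_0\Phi|_{x_0=0}\big) = (f,g)$, and uniqueness of the Cauchy problem for $\square\Phi=0$ guarantees that a wave is recovered from its Cauchy data. Hence, to identify the operator $K_0$ on $\S^2$ it suffices to (i) check that $K_0\Phi$ is again a wave in $\T_0$, and (ii) compute its Cauchy data $\big((K_0\Phi)|_{x_0=0},(\partial_0K_0\Phi)|_{x_0=0}\big)$ and recognise it as the asserted output.

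For (i) I would argue that $K_0\Phi$ solves $\square(K_0\Phi)=0$: each $V(s)\Phi$ is a wave, since the flow $Z$ is conformal and $\gamma$ is the associated weight, so that $V(s)$ carries solutions of $\square\Phi=0$ into solutions; as $\square$ acts on the spacetime variables it commutes with $\tfrac{d}{ds}$, whence $\square K_0\Phi=\tfrac{d}{ds}\big|_{s=0}\square V(s)\Phi=0$. The explicit local expression \eqref{K} shows moreover that $K_0\Phi$ is smooth, and that its Cauchy data, computed below, are products of $f,g$ and their derivatives with polynomials, hence again lie in $\S$; so $K_0\Phi\in\T_0$.

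For (ii) I would simply read off Corollary \ref{KK'}. Its first identity gives $(K_0\Phi)|_{x_0=0}=\tfrac12(1-r^2)\,\partial_0\Phi|_{x_0=0}=\tfrac12(1-r^2)g$, the first output component. Its second identity gives $(\partial_0K_0\Phi)|_{x_0=0}=\tfrac12(1-r^2)\nabla^2\Phi-r\partial_r\Phi-D\Phi\big|_{x_0=0}$; since the spatial operators $\nabla^2$, $\partial_r$ and multiplication by $r$ commute with restriction to $x_0=0$ and $\Phi|_{x_0=0}=f$, this equals $\tfrac12(1-r^2)\nabla^2f-r\partial_rf-Df$, the second output component. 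By the uniqueness noted above this yields $K_0 w_0(f,g)=w_0\big(\tfrac12(1-r^2)g,\ \tfrac12(1-r^2)\nabla^2f-r\partial_rf-Df\big)$, i.e.\ the matrix form \eqref{Kmat}. There is no genuine obstacle here: the computational heart of the matter already resides in the preceding Proposition expressing $K_0$ in $(x_0,r)$ coordinates and in Corollary \ref{KK'}, and the only point demanding a line of care is step (i), the stability of $\T_0$ under $K_0$, which is immediate from \eqref{K}.
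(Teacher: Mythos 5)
Your proposal is correct and follows exactly the route the paper intends: the paper states Proposition \ref{Kprop} as an immediate translation of Corollary \ref{KK'} (``The above corollary translates into the following'') with no further argument, and your proof simply spells out that translation via uniqueness of the Cauchy problem. The extra step (i), checking that $K_0\Phi$ is again a smooth solution with Schwartz-class Cauchy data so that writing $K_0\Phi = w_0(\cdot,\cdot)$ is legitimate, is a reasonable point of care that the paper leaves implicit.
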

Now the symplectic form $\b$ on $\T$ \eqref{beta} is given, in terms of Cauchy data, as
\ben\label{beta2}
\b(\Phi,\Psi) = \frac12\int_{\mathbb R^d} \big(f_2 g_1 - f_1 g_2 \big) d\bx \ .
\een
with $\Phi = w (f_1, g_1)$, $\Psi = w(f_2, g_2)$.

Thus
\begin{multline}\label{b0}
2\b(\Phi, K \Phi) = \int_{x_0 =0}  (K \Phi)\partial_0 \Phi - \Phi\partial_0 (K \Phi)dx\\
= \int_{\mathbb R^d}\frac12(1-r^2)g^2d\bx
 - \int_{\mathbb R^d}\frac12(1-r^2)f\nabla^2 fd\bx 
+ \int_{\mathbb R^d}rf\partial_r f d\bx
+ D\int_{\mathbb R^d}f^2d\bx \ .
\end{multline}
So, taking into account Identity \eqref{r12}, we have
\[
2\b(\Phi, K \Phi) 
= \frac12\int_{\mathbb R^d}(1-r^2)\big(g^2 + |\nabla f|^2\big)d\bx 
+  D\int_{\mathbb R^d}f^2d\bx \ ,
\]
namely
\ben\label{b1}
\b(\Phi, K \Phi) 
= \frac14\int_{x_0 =0}(1-r^2)\big( (\partial_0 \Phi)^2 + |\nabla\Phi|^2 \big)dx  + \frac{D}2 \int_{x_0 =0}\Phi^2dx \, .
\een
So we have obtained the following proposition.
\begin{proposition}\label{KT}
With $A =- \imath_0 K$, we have
\ben\label{Af}
(\Phi,A \Phi) =
\b(\Phi,K \Phi) = \frac12\int_{x_0 =0}(1 -r^2)\langle T_{00}\rangle_\Phi dx  + \frac{D}2\int_{x_0 =0}\Phi^2dx
\een
with $\langle T_{00}\rangle_\Phi = \frac12\big((\partial_0 \Phi)^2 + |\nabla\Phi|^2\big)$ the energy density given by 
classical stress-energy tensor $T$. 
\end{proposition}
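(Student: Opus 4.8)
The plan is to split the statement into its two equalities and dispatch them separately. The second equality,
\[
\b(\Phi,K_0\Phi)=\frac12\int_{x_0=0}(1-r^2)\langle T_{00}\rangle_\Phi\,dx+\frac D2\int_{x_0=0}\Phi^2\,dx,
\]
is just \eqref{b1} rewritten through the energy density: since $\langle T_{00}\rangle_\Phi=\frac12\big((\partial_0\Phi)^2+|\nabla\Phi|^2\big)$, the factor $\frac14(1-r^2)\big((\partial_0\Phi)^2+|\nabla\Phi|^2\big)$ appearing in \eqref{b1} is exactly $\frac12(1-r^2)\langle T_{00}\rangle_\Phi$, while the $\frac D2\int\Phi^2$ term is untouched. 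So this part is immediate once \eqref{b1} is in hand.

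For the first equality I would work directly from the definition of the complex scalar product on $\H_0$, namely $(\Phi,\Psi)_0=\b(\Phi,\imath_0\Psi)+i\,\b(\Phi,\Psi)$, so that $\Re(\Phi,\Psi)_0=\b(\Phi,\imath_0\Psi)$ and $\Im(\Phi,\Psi)_0=\b(\Phi,\Psi)$. Since $A_0=-\imath_0 K_0$ and $\imath_0^2=-1$, one has $\imath_0 A_0=K_0$, whence the real part is
\[
\Re(\Phi,A_0\Phi)_0=\b(\Phi,\imath_0 A_0\Phi)=\b(\Phi,K_0\Phi),
\]
which by \eqref{b1} is a real number. It then remains to check that the imaginary part vanishes, i.e. $\b(\Phi,A_0\Phi)=0$. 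Using $A_0=-\imath_0 K_0$ once more together with $\b(\Phi,\imath_0\,\cdot)=\Re(\Phi,\cdot)_0$, this imaginary part equals $-\b(\Phi,\imath_0 K_0\Phi)=-\Re(\Phi,K_0\Phi)_0$, so the entire first equality collapses to the single identity $\Re(\Phi,K_0\Phi)_0=0$, that is, to the skew-Hermiticity of $K_0$ on $\S^2$.

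This last point is where the real work sits, and it is the step I expect to be the main obstacle: $K_0$ intertwines the purely local differential and multiplication operators $L,M$ of Proposition \ref{Kprop} with the \emph{nonlocal} complex structure $\imath_0$ built from $\mu_0=\sqrt{-\nabla^2}$, so $\Re(\Phi,K_0\Phi)_0=0$ is not a pointwise computation but a genuine consequence of $K_0$ generating a unitary flow. I would obtain it from Theorem \ref{MG}: $V(s)=\Delta_B^{-is/2\pi}$ is a one-parameter unitary group on $\H_0$, and for $\Phi\in\S^2=\T_0$ the orbit $s\mapsto V(s)\Phi$ is differentiable with derivative $K_0\Phi$, so $\Phi$ lies in the domain of the skew-selfadjoint generator and $\Re(\Phi,K_0\Phi)_0=0$ follows automatically (equivalently, one reads this off the complex linearity of the generator). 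Substituting back yields $\Im(\Phi,A_0\Phi)_0=0$, hence $(\Phi,A_0\Phi)_0=\b(\Phi,K_0\Phi)$, and combining with the first paragraph completes the proof.
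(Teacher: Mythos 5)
Your proposal is correct and follows essentially the same route as the paper: the second equality is \eqref{b1}, and the first reduces to the reality of $(\Phi,A_0\Phi)$, which the paper gets in one line from the selfadjointness of $A_0$ (equivalently, the skew-adjointness of $K_0$ as generator of the unitary group $V$ from Theorem \ref{MG}, exactly the fact you invoke). Your version merely spells out the real/imaginary decomposition of the scalar product that the paper's chain $(\Phi,A_0\Phi)=\Im i(\Phi,A_0\Phi)=\Im(\Phi,K_0\Phi)=\b(\Phi,K_0\Phi)$ compresses.
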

\begin{proof}
Since $A$ is selfadjoint, $(\Phi,A \Phi)$ is real, so
\[
(\Phi,A \Phi) = \Im i(\Phi,A \Phi) = \Im (\Phi, \imath_0 A \Phi) = \Im (\Phi, K \Phi) = \b(\Phi, K \Phi) \, ,
\]
so the proposition follows from \eqref{b1}. 
\end{proof}
Since
\[
\b(K \Phi, \Psi) + \b(\Phi, K\Psi) = 0\, ,
\]
by the polarisation identity, we then get
\begin{multline}\label{hs}
\b(\Phi, K \Psi) =
\frac12\Big(\b\big(\Phi + \Psi , K (\Phi +\Psi)\big) - \b\big(\Phi - \Psi , K (\Phi -\Psi)\big)\Big)\\
= \frac12\int_{x_0 = 0}(1-r^2) \langle T_{00}\rangle_{\Phi, \Psi} dx
+ \frac{D}2\int_{x_0 = 0}\Phi\Psi dx \, ,
\end{multline}
with
\ben\label{ed2}
\langle T_{00}\rangle_{\Phi, \Psi}  = \frac12 \big( \partial_0 \Phi \partial_0 \Psi  + \nabla_{\bx} \Phi \nabla_{\bx} \Psi \big)\ .
\een

\subsection{$K$ acting on $\H$}
Set $H(B)  = H^{1/2}(B) \oplus H^{-1/2}(B) $ for the standard subspace associated with $B$ as before.

Let 
\[
M: D(M)\subset H^{-1/2} \to H^{1/2}\, , \quad L : D(L)\subset H^{1/2} \to H^{-1/2}
\] 
be the closures of the operators
\begin{align*}
M &= \frac12  (1 - r^2)  \\
L  &= \frac12 (1-r^2)\nabla^2 - r \partial_r - D
\end{align*}
on $\S$. ($M$ is the multiplication operator by $\frac12  (1 - r^2)$). 
We shall indeed see that both operator $M$ and $L$ on $\S$ are closable as $M^*$ contains $L$ on $\S$. 

Denote by $K = -\frac{\imath_0}{2\pi} \log \Delta_B$ the modular Hamiltonian relative to $B$.
\begin{lemma}\label{essa}
$\S^2$ is a core for $K$ (as real linear operator).
\end{lemma}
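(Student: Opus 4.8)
The plan is to read the statement as the assertion that $K_0$, restricted to the domain $\S^2\cong\T_0$ on which it acts by the explicit differential operator \eqref{Kmat}, is essentially skew-selfadjoint, so that its closure is the full skew-selfadjoint generator $K_0=-\frac{\imath_0}{2\pi}\log\Delta_B$ of the modular group $V(s)=\Delta_B^{-is/2\pi}$ of Theorem \ref{MG}. Two facts are immediate and will be recorded first. By Proposition \ref{Kprop} the operator \eqref{Kmat} maps $C_0^\infty$ Cauchy data to $C_0^\infty$ Cauchy data, so $K_0\T_0\subseteq\T_0$, and $\T_0$ is dense in $\H_0$. Moreover $K_0|_{\T_0}$ is skew-symmetric for the real part of the scalar product: this follows from the antisymmetry $\b(K_0\Phi,\Psi)+\b(\Phi,K_0\Psi)=0$ used in \eqref{hs} together with the complex-linearity of $K_0$, exactly as in Corollary \ref{sself}.

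First I would produce a concrete core from the geometry. The conformal flow $Z(\cdot,s)$ fixes $\pm 1$ and restricts to a complete real-analytic diffeomorphism of the bounded double cone $O$ onto itself; hence $V(s)$ preserves the space $\T_0(B)$ of waves with $C_0^\infty(B)$ Cauchy data, maps it into itself for every $s$, and $s\mapsto V(s)\Phi$ is real-analytic in $\H_0$ for $\Phi\in\T_0(B)$. Since $\T_0(B)$ is a dense, $V$-invariant subspace of $H_0(B)$ contained in $D(K_0)$, it is a core for the restriction of $K_0$ to the real Hilbert space $H_0(B)$; equivalently $K_0|_{\T_0(B)}$ is essentially skew-selfadjoint on $H_0(B)$ and $\partial$-symplectic there. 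Corollary \ref{sself} then yields that the complex-linear extension of $K_0|_{\T_0(B)}$, whose domain is $\T_0(B)+\imath_0\T_0(B)$, is essentially skew-selfadjoint on $\H_0$; as it is a restriction of the skew-selfadjoint $K_0$, its closure is $K_0$. Thus $\T_0(B)+\imath_0\T_0(B)$ is a core for $K_0$.

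It remains to replace this core by $\S^2=\T_0$ itself, and this is the step I expect to be the main obstacle. One inclusion is free, $\T_0(B)\subseteq\T_0$; the difficulty is that the complex structure is nonlocal, $\imath_0\langle f,g\rangle=\langle\mu_0^{-1}g,-\mu_0 f\rangle$ with $\mu_0=|\nabla|$, so that $\imath_0\T_0(B)\not\subseteq\T_0$. I would show instead that each $\imath_0\Phi$, $\Phi\in\T_0(B)$, lies in the graph closure of $K_0|_{\T_0}$: using that $C_0^\infty$ is dense in $H_0^{\pm1/2}$, one chooses $\Phi_n\in\T_0$ with $\Phi_n\to\imath_0\Phi$ in $\H_0$ and must then control the graph term $K_0\Phi_n\to K_0\imath_0\Phi=\imath_0 K_0\Phi$ through the explicit form \eqref{Kmat}. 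The delicate point is that multiplication by $\tfrac12(1-r^2)$ does not gain Sobolev regularity, so the convergence of the off-diagonal entries must be arranged in the correct scale. Using the resolvent identities of Proposition \ref{inv} (and the compactness mechanism of Proposition \ref{diffC}), this reduces to the density of the ranges $\operatorname{ran}(K_0\pm1)|_{\T_0}$, i.e. to the vanishing of the deficiency spaces $\ker(K_0^*\mp 1)$ of $K_0|_{\T_0}$: a null vector $\Psi$ satisfies $(-K_0+1)\Psi=0$ in the distributional sense, and one must deduce $\Psi=0$ from elliptic regularity for the second-order Legendre-type operator $L_0$ underlying \eqref{Kmat}. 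Establishing this vanishing—equivalently, the essential skew-selfadjointness of the singular elliptic system \eqref{Kmat} on $\S^2$—is the crux of the argument.
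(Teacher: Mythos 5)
Your first two paragraphs are sound: $\T_0(B)$ is a dense, $V$-invariant subspace of $H_0(B)$ contained in $D(K_0)$, hence a core for the skew-selfadjoint restriction $K_0|_{H_0(B)}$, and Corollary \ref{sself} then makes $\T_0(B)+\imath_0\T_0(B)$ a core for $K_0$ on $\H_0$. But the lemma is not proved at that point, precisely because $\imath_0\T_0(B)\not\subseteq\S^2$, and your third paragraph does not close the gap: you reduce the remaining step to the vanishing of the deficiency spaces of $K_0|_{\S^2}$ and then state that establishing this vanishing ``is the crux of the argument'' without carrying it out. That crux is a genuinely hard degenerate boundary problem --- the operator $L_0$ degenerates at $r=1$, so ordinary elliptic regularity does not dispose of distributional solutions of $(K_0\mp 1)\Psi=0$; indeed the paper points out in Remark \ref{RDP} that the associated degenerate Dirichlet problem is solved only \emph{indirectly}, as a consequence of the modular theory, not as an input to it. As written, the proposal is therefore incomplete.

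The missing idea is to avoid the complexification entirely by bringing in the causal complement. The conformal flow \eqref{Z} preserves not only the double cone $O$ but also its causal complement, so by Haag duality \eqref{duality} the real linear subspace $C_0^\infty(B)^2+C_0^\infty(B')^2$ is globally $V(s)$-invariant; it is dense in $\H_0$ (factoriality of $H_0(B)$, i.e.\ density of $H_0(B)+H_0(B)'$) and contained in $D(K_0)$, hence it is a core for $K_0$ by the standard invariant-dense-subspace criterion. Since $C_0^\infty(B)^2+C_0^\infty(B')^2\subseteq\S^2\subseteq D(K_0)$, the larger subspace $\S^2$ is a core as well. This one-line enlargement from $B$ to $B\cup B'$ is what replaces your entire third paragraph, and it sidesteps both the nonlocality of $\imath_0$ and the deficiency-space analysis.
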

\begin{proof}
By the geometrical action of the modular group $\Delta_B^{is}$, and duality, one sees that $C_0^{\infty}(B)^2 + C_0^{\infty}(B')^2$ is a dense,  real linear subspace of $H$, globally $\Delta_B^{is}$-invariant,  contained in the domain of the generator $K$; thus it is a core for $K$. So $\S^2$ is a core to being a larger subspace still contained in $D(K)$.   
\end{proof}
So we have:
\begin{theorem}\label{mainth}
The massless modular Hamiltonian $\log \Delta_B$ is given by $- 2\pi A = \log \Delta_B$ with 
\[
K = \left[\begin{matrix} 0 & M \\
L &0\end{matrix}\right] =  \left[\begin{matrix} 0 & M \\
-M^* &0\end{matrix}\right]\, .
\]
and $A = -\imath_0 K$. 
\end{theorem}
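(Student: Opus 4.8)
The plan is to assemble the statement from ingredients already in place, so that only one genuinely new relation, $L_0=-M_0^*$, requires work. The identity $-2\pi A_0=\log\Delta_B$ is purely definitional: since $K_0=-\frac{\imath_0}{2\pi}\log\Delta_B$ and $\imath_0^2=-1$, we get $A_0=-\imath_0 K_0=-\frac{1}{2\pi}\log\Delta_B$, hence $\log\Delta_B=-2\pi A_0$. By Theorem \ref{MG} this $K_0$ is the generator of the one-parameter group $V(s)=\Delta_B^{-is/(2\pi)}$, so the whole content of the theorem is the matrix description of $K_0$, together with the off-diagonal relation $L_0=-M_0^*$.

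First I would record, from Proposition \ref{Kprop}, that on $\S^2$ the operator $K_0$ is the off-diagonal block $\left[\begin{smallmatrix}0&M_0\\ L_0&0\end{smallmatrix}\right]$, with $M_0,L_0$ the differential operators restricted to $\S$. Next I would establish the formal skew-adjoint relation $\langle M_0 g,f\rangle_{1/2}=-\langle g,L_0 f\rangle_{-1/2}$ for $f,g\in\S$, where $\langle f,h\rangle_{1/2}=\frac12(f,\mu_0 h)$ and $\langle g,k\rangle_{-1/2}=\frac12(g,\mu_0^{-1}k)$ are the real inner products on $H_0^{\pm1/2}$. This follows by evaluating the real-part skew-Hermiticity of $K_0$ — equivalently the $\partial$-symplectic identity $\beta(K_0\Phi,\Psi)+\beta(\Phi,K_0\Psi)=0$ used before \eqref{hs}, together with the complex linearity of $K_0$ (cf. Corollary \ref{sself}) — on the pure Cauchy data $\Phi=\langle 0,g\rangle$ and $\Psi=\langle f,0\rangle$. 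Since $\S$ is dense in $H_0^{\pm1/2}$, the adjoints $M_0^*$ and $L_0^*$ are densely defined, hence closed, and contain $-L_0|_\S$ and $-M_0|_\S$ respectively; this already yields the closability of $M_0$ and $L_0$ claimed in the text, with $\bar M_0\subseteq -L_0^*$ and $\bar L_0\subseteq -M_0^*$.

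To upgrade these inclusions to exact operator identities I would invoke skew-selfadjointness. As the generator of the orthogonal (indeed unitary) group $V(s)=\Delta_B^{-is/(2\pi)}$, the operator $K_0$ is skew-selfadjoint, and by Lemma \ref{essa} the restriction $K_0|_{\S^2}$ is essentially skew-selfadjoint with closure $K_0$; therefore $K_0=-K_0^*=-(K_0|_{\S^2})^*$. Computing the adjoint of the off-diagonal block gives $(K_0|_{\S^2})^*=\left[\begin{smallmatrix}0&L_0^*\\ M_0^*&0\end{smallmatrix}\right]$, so $K_0=\left[\begin{smallmatrix}0&-L_0^*\\ -M_0^*&0\end{smallmatrix}\right]$. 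On the other hand, since the graph norm of the block operator splits as $(\|f\|^2+\|L_0 f\|^2)+(\|g\|^2+\|M_0 g\|^2)$, its closure is $\left[\begin{smallmatrix}0&\bar M_0\\ \bar L_0&0\end{smallmatrix}\right]$ on $D(\bar L_0)\oplus D(\bar M_0)$. Equating the two block descriptions of the single operator $K_0$ matches both the domains and the actions, giving $\bar L_0=-M_0^*$ and $\bar M_0=-L_0^*$; identifying $M_0,L_0$ with their closures yields $K_0=\left[\begin{smallmatrix}0&M_0\\ -M_0^*&0\end{smallmatrix}\right]$, as asserted.

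The main obstacle, and the only delicate point, is promoting the formal-adjoint inclusion $\bar L_0\subseteq -M_0^*$ to an exact identity, i.e. matching domains rather than merely actions. This is exactly what essential skew-selfadjointness on the core $\S^2$ buys: the separable structure of the graph norm forces $D(\bar L_0)=D(M_0^*)$. The analytic input underneath is the degeneracy of the weight $\frac12(1-r^2)$ at the boundary $\partial B$, which is what lets $M_0$ carry part of $H_0^{-1/2}(B)$ into the higher-regularity space $H_0^{1/2}(B)$ and makes $K_0|_{\S^2}$ essentially skew-selfadjoint on the restricted spaces; this is already encoded in Lemma \ref{essa} through the geometric action of the modular group and duality, so no further estimate is needed here.
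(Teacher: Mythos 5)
Your proposal is correct and follows the same route as the paper's (very terse) proof: the first matrix identity comes from the core property of $\S^2$ (Lemma \ref{essa}) together with the fact that the closure of an off-diagonal block operator is the block of closures, and the relation $L_0=-M_0^*$ comes from skew-selfadjointness of $K_0$ (equivalently, selfadjointness of $A_0$). You merely spell out the block-adjoint computation and the closability of $M_0,L_0$ that the paper leaves implicit in the surrounding text.
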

\begin{proof}
By Lemma \ref{essa}, we have the first equality. As $A$ is selfadjoint, we have $K^* = -K$, thus the second equality holds. 
\end{proof}
\begin{corollary}\label{Ki}
We have $K   = -\imath_0  K \imath_0$. Thus 
\[
\mu M \mu =  M^* = -L \, .
\]
\end{corollary}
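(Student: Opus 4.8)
The first identity merely records that $K_0$ commutes with the complex structure. By Theorem~\ref{mainth} we have $K_0 = \imath_0 A_0$ with $A_0 = -\tfrac{1}{2\pi}\log\Delta_B$ selfadjoint on the \emph{complex} Hilbert space $\H_0$. Being a Borel function of the positive selfadjoint operator $\Delta_B$, the operator $A_0$ is complex linear, i.e. $A_0\imath_0 = \imath_0 A_0$; since $\imath_0$ is unitary with $\imath_0 D(A_0) = D(A_0)$, this is a clean equality of closed operators. Hence $K_0\imath_0 = \imath_0 A_0\imath_0 = \imath_0 K_0$, and multiplying on the left by $\imath_0$ and using $\imath_0^2 = -1$ gives the asserted relation
\[
K_0 = -\imath_0 K_0 \imath_0\, .
\]

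To extract the concrete identities I insert the block forms $\imath_0 = \left[\begin{matrix}0 & \mu_0^{-1}\\ -\mu_0 & 0\end{matrix}\right]$ from \eqref{imum} (at $m=0$) and $K_0 = \left[\begin{matrix}0 & M_0\\ L_0 & 0\end{matrix}\right]$ from Theorem~\ref{mainth}, and compute the product on $\S^2$, which is a core for $K_0$ by Lemma~\ref{essa}. A direct multiplication shows that the diagonal blocks cancel and
\[
-\imath_0 K_0 \imath_0 = \left[\begin{matrix}0 & -\mu_0^{-1}L_0\mu_0^{-1}\\ -\mu_0 M_0\mu_0 & 0\end{matrix}\right]\, .
\]
Comparing the lower-left blocks with those of $K_0$ and using $K_0 = -\imath_0 K_0\imath_0$ yields $-\mu_0 M_0\mu_0 = L_0$, that is $\mu_0 M_0\mu_0 = -L_0$ (the upper-right blocks give the equivalent relation $\mu_0^{-1}L_0\mu_0^{-1} = -M_0$). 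Finally, by the second displayed expression for $K_0$ in Theorem~\ref{mainth} we have $L_0 = -M_0^*$, hence $-L_0 = M_0^*$, and the chain $\mu_0 M_0\mu_0 = M_0^* = -L_0$ follows.

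The only point demanding care is that $M_0$ and $L_0$ are unbounded, so the block manipulations must be read on an appropriate domain. This is harmless: $\mu_0 : H_0^{1/2}\to H_0^{-1/2}$ is a unitary identification, the abstract relation $K_0 = -\imath_0 K_0\imath_0$ already holds at the level of closed operators (as $\imath_0$ preserves $D(K_0)$), and $\S^2$ is a common core on which $M_0, L_0$ act as the stated multiplication and differential operators. The identities therefore hold on $\S^2$ and pass to the closures, so no finer domain bookkeeping is needed. As an internal consistency check, $\mu_0 M_0\mu_0 = -L_0$ can be verified directly on $\S$ from $\widehat{\mu_0 f}(\bp) = |\bp|\hat f(\bp)$ and the explicit forms of $M_0,L_0$; this computation is precisely the one singling out the value $D = (d-1)/2$ alluded to in the proof of Theorem~\ref{MG}.
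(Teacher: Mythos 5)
Your proof is correct and follows the same route as the paper, whose entire proof is the one line ``Immediate by the complex linearity of $K_0$'': you simply spell out why $K_0$ is complex linear (it is $\imath_0 A_0$ with $A_0$ a function of $\Delta_B$) and carry out the block multiplication that the paper leaves implicit. The computation and the domain remarks are accurate, so nothing needs to change.
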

\begin{proof}
Immediate by the complex linearity of $K$. 
\end{proof}
Note, in particular, that the constant $D$ in the expression of the operator $L_0$ is fixed by the above corollary.  
\begin{corollary}\label{inv0} 
$K^B = K |_{H(B)}$ is skew-selfadjoint on $H(B)$. 
\end{corollary}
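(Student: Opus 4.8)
The plan is to read the statement off directly from the abstract invariance criterion of Proposition \ref{inv}, now that the concrete differential operator $K_0$ has been matched with the modular generator. First I would recall from Theorem \ref{mainth} that, on its natural domain in $\H_0$, one has $K_0 = \imath_0 A_0$ with $A_0 = -\imath_0 K_0 = -\frac{1}{2\pi}\log\Delta_B$ a selfadjoint operator; equivalently $K_0 = -\frac{\imath_0}{2\pi}\log\Delta_B$. In particular $A_0$ is genuinely selfadjoint, so $e^{isA_0}$ is a one-parameter unitary group; and since $e^{isA_0} = \Delta_B^{-is/2\pi}$, this group is exactly the group $V(s)$ of Theorem \ref{MG} whose generator is $K_0$.

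The second step is to invoke the defining property of the modular operator of a standard subspace, $\Delta_B^{it}H_0(B) = H_0(B)$ for all $t\in\RR$. Hence $e^{isA_0}H_0(B) = \Delta_B^{-is/2\pi}H_0(B) = H_0(B)$ for every $s$, i.e. condition $(i)$ of Proposition \ref{inv} holds with $\H = \H_0$, $H = H_0(B)$ and $A = A_0$. Applying the implication $(i)\Rightarrow(vi)$ of that proposition, with $K = \imath_0 A_0 = K_0$, yields at once that $D(K_0)\cap H_0(B)$ is dense in $H_0(B)$, is mapped into $H_0(B)$, and that $K_0^B = K_0|_{H_0(B)}$ is skew-selfadjoint on $H_0(B)$ with respect to the real part of the scalar product. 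This is precisely the assertion of the corollary.

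I do not expect a genuine obstacle here: the substantive work has already been carried out upstream. The geometric identification of $V(s)$ with the modular group (Theorem \ref{MG}) and, above all, the fact that $\S^2$ is a core for $K_0$ (Lemma \ref{essa}) are what let one identify the explicit matrix operator of Theorem \ref{mainth} with $-\frac{\imath_0}{2\pi}\log\Delta_B$ on all of $\H_0$; once this identification is in hand, the present statement is a one-line specialisation of Proposition \ref{inv}. The only point to keep in mind is that $K_0^B$ must be read with the domain $D(K_0)\cap H_0(B)$ inherited from the closure of $K_0$, rather than as an independently defined closure of the differential expression on $C_0^\infty(B)^2$; the two agree precisely because $\S^2$ is a core for $K_0$ and $H_0(B)$ is globally $\Delta_B^{is}$-invariant, which is exactly the content supplied by Lemma \ref{essa} together with Proposition \ref{inv}.
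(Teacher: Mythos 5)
Your proof is correct and follows exactly the paper's route: the paper's own one-line argument is "Since $H_0(B)$ is globally invariant for $e^{sK_0}$, we may apply Prop.~\ref{inv}," which is precisely your invocation of the modular invariance $\Delta_B^{is}H_0(B)=H_0(B)$ together with the implication $(i)\Rightarrow(vi)$ of Proposition~\ref{inv}. Your additional remark on reading the domain of $K_0^B$ as $D(K_0)\cap H_0(B)$ via Lemma~\ref{essa} is a helpful clarification but not a departure from the paper's argument.
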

\begin{proof}
Since $H(B)$ is globally invariant for $e^{sK}$, we may apply Prop. \ref{inv}. 
\end{proof}
\begin{remark}\label{RDP} By Cor. \ref{inv0}, we have indirectly solved a Dirichlet problem for the degenerate elliptic operators $L M - 1$ and $M L  - 1$, see \cite{Bau}. 
\end{remark}
The following corollary manifests the thermodynamical nature of the modular Hamiltonian, cf. \cite{PW}. 
\begin{corollary}
$A$ is completely active with respect to $H(B)$. 
\end{corollary}
\begin{proof}
Immediate by Theorem \ref{Cpass}. 
\end{proof}
Let's now consider balls of different radii. 
If $F$ is a function on $\mathbb R^n$ and $\l >0$, we set $F_\l(x) = F(\l x)$. 
If $\Phi\in\T$, we have
\[
(\square\Phi_\l)(x) 
= \l^2 (\square\Phi)(\l x)  = 0\, ,
\]
thus $\Phi_\l \in \T$ and we have a real linear map bijection $\delta_\l : \T \to \T$
\ben\label{SD}
\delta_\l : \Phi \in\T \to \l^{D}\Phi_{\l }\in \T\ .
\een
In terms of the Cauchy data, we have $
\delta_\l : w(f,g) \to w( \l^D f_\l , \l^{D+1} g_\l)$. 

If $\Phi,\Psi\in\T(B)$  
\[
\b(\delta_\l\Phi, \delta_\l\Psi) 
= \frac{\l^{d}}2\int_{x_0 =0} \Big(\Psi(\l x)\big(\partial_0 \Phi\big)(\l x) - \Phi(\l x)\big(\partial_0 \Psi\big)(\l x) \Big) dx =\b(\Phi , \Psi )\ ,
\]
namely $\delta_\l$ preserves the symplectic form $\b$. One can see that $\delta_\l$ commutes with the complex structure on $\imath_0$, so $\delta_\l$ is a unitary
\[
\delta_\l : \H \to \H \, .
\]
Let $B_R$ the ball in $\mathbb R^d$ with center at the origin and radius $R >0$.  With $H(B_R)$ the standard subspace of $\H$ associated with $B_R$, we have
\[
\delta_\l : H (B_R)\to H (B_{\l^{-1}R} ) \, .
\]
With $\Delta_{R}$ the modular operator of $H(B_R)$, we then have $\Delta_{{\l^{-1}R}} =  \delta_\l \Delta_{R}\delta_{\l^{-1}}$, so
\ben\label{cR1}
\log \Delta_{R} =  \delta_{R}\log \Delta_{1}\delta_{R^{-1}}\, .
\een
The above discussion yields the following formula for the matrix elements of the modular Hamiltonian. 
\begin{theorem}\label{ModR}
With $\log \Delta_{R}$ the massless modular Hamiltonian for the radius $R$ ball $B_R$, we have 
\[
-\Re (\Phi, \log \Delta_{R} \Psi) =
2\pi\int_{B_R} \frac{R^2 - r^2}{2R} \langle T_{00} \rangle_{\Phi, \Psi} d\bx + 2\pi \frac{d-1}{R}\int_{B_R} \Phi\Psi d\bx
\, ,
\]
where $\Phi, \Psi\in\T(B_R)$. 
\end{theorem}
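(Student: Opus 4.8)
The plan is to deduce this from the unit-ball quadratic form of Theorem~\ref{KT2} by two soft manoeuvres---polarisation and dilation covariance---so that the statement becomes essentially a scaling computation, with no new analytic input beyond the identification $-2\pi A_m=\log\Delta_{B,m}$ of Theorem~\ref{MH0}.

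First I would polarise Theorem~\ref{KT2}. Since $A_m$ is selfadjoint, $(\Phi,A_m\Phi)$ is real, so for real waves $\Phi,\Psi\in\T_m(B)$ the symmetric bilinear form is recovered by
\[
\Re(\Phi,A_m\Psi)=\tfrac14\big[(\Phi+\Psi,A_m(\Phi+\Psi))-(\Phi-\Psi,A_m(\Phi-\Psi))\big].
\]
Substituting \eqref{AT} for the three diagonal contributions, exactly as in the massless computation \eqref{hs}, yields
\[
\Re(\Phi,A_m\Psi)=\tfrac12\int_B(1-r^2)\langle T^{(m)}_{00}\rangle_{\Phi,\Psi}\,d\bx+\tfrac D2\int_B\Phi\Psi\,d\bx+\tfrac12 m^2\!\!\int_{B\times B}\!\!G_m(\bx-\by)\Phi(\bx)\Psi(\by)\,d\bx\,d\by,
\]
with $\langle T^{(m)}_{00}\rangle_{\Phi,\Psi}$ the polarised energy density.

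Next I would transport this to the ball $B_R$ using the dilation $\delta_R$ of \eqref{SD}, a unitary $\H_m\to\H_{Rm}$ that carries $H_m(B_R)$ onto $H_{Rm}(B)$ and intertwines the modular operators, $\delta_R\Delta_{R,m}\delta_R^{-1}=\Delta_{B,Rm}$, as recorded in \eqref{cR1}. Combined with Theorem~\ref{MH0} at mass $Rm$ this gives
\[
-\Re(\Phi,\log\Delta_{R,m}\Psi)=2\pi\,\Re(\delta_R\Phi,A_{Rm}\,\delta_R\Psi),\qquad \Phi,\Psi\in\T_m(B_R).
\]
As $\delta_R\Phi,\delta_R\Psi\in\T_{Rm}(B)$, I then apply the polarised unit-ball formula above at mass $Rm$ and change variables $\bx\mapsto R\bx$. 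The ingredients are $(\delta_R\Phi)(\bx)=R^{D}\Phi(R\bx)$ (so that each of $\partial_0$ and $\nabla$ contributes a further factor $R$), the Jacobian $R^{-d}$ (respectively $R^{-2d}$ for the double integral), the identity $1-|\bx|^2=(R^2-|R\bx|^2)/R^2$, and the homogeneity $G_{Rm}(\bx)=R^{d-2}G_m(R\bx)$ of the Helmholtz Green function \eqref{Gm}. Collecting these and sending the unit ball to $B_R$ produces the three asserted terms.

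The only genuine work is the bookkeeping of powers of $R$, which hinges on the scaling dimension $D=(d-1)/2$ of \eqref{D}: the stress-energy term picks up $R^{2D+2}R^{-2}R^{-d}=R^{-1}$, the normalisation term $R^{2D}R^{-d}=R^{-1}$, and the Yukawa term $R^{2}R^{d-2}R^{2D}R^{-2d}=R^{-1}$, so that all three acquire the common weight $R^{-1}$ required by the statement. I expect no analytic obstacle, since essential selfadjointness, the compactness of the deformation, and the identification of $\log\Delta_{B,m}$ have all been secured in Theorems~\ref{KT2} and~\ref{MH0}; one need only check at the outset that $\delta_R$ indeed maps $\T_m(B_R)$ into $\T_{Rm}(B)$, so that the unit-ball formula applies after dilation.
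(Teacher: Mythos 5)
Your proposal is correct and is essentially the paper's own argument: the proof of Theorem \ref{ModR} is given as ``immediate by the above discussion'', and that discussion is precisely the dilation covariance \eqref{cR1} together with the scaling $G_{\l m}(\bx)=\l^{d-2}G_m(\l\bx)$, applied to the polarised form of Theorem \ref{KT2}, so you have merely written out the change of variables that the paper leaves implicit. One caveat: if you actually carry out your bookkeeping on the middle term you get the coefficient $2\pi\cdot\frac{D}{2}\cdot R^{-1}=\pi\frac{d-1}{2R}$ rather than the $2\pi\frac{d-1}{R}$ printed in the statement; the printed coefficient appears to be a typo in the paper (the introduction and Corollary \ref{SR} both carry $\frac{\pi D}{R}=\pi\frac{d-1}{2R}$), so your method in fact yields the corrected constant rather than literally ``the three asserted terms''.
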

\begin{proof}
Immediate by the above discussion. 
\end{proof}

\section{Entropy density of a wave packet}
With $\H$  our wave Hilbert space, the following proposition gives an explicit formula for the action of the cutting projection $P_B$ on $\H$ relative to $H(B)$, cf. \cite{FG}. 
\begin{proposition}\label{PmM}
$P_B$ is  given by the matrix 
\begin{equation}\label{Pm}
P_B = 
\left[\begin{matrix} P_+ & 0 \\ 
0  &P_-\end{matrix}\right]\, ,
\end{equation}
with $ P_\pm: D(P_\pm) \subset H^{\pm 1/2} \to H^{\pm 1/2}$ the operator of multiplication by the characteristic function $\chi_B$ of $B$ in $H^{\pm 1/2}$. 
\end{proposition}
\begin{proof}
By duality  \eqref{duality}, $H(B)'= H(B')$, therefore $D(P_B) = H(B) + H(B')$. 

By Prop. \ref{Pir}, $P_B \imath_0 |_{H(B)}$ is a real linear, densely defined operator on $H_0(B)$.
Clearly $P_B$ acts as in \eqref{Pm} on vectors in $H(B) + H(B')$ given by smooth functions, hence on all $D(P_B)$ because $P_B$ is a closed operator \cite{CLR19}. 
\end{proof}

\begin{proposition}\label{PH}
Let $\Phi\in\T$ and set $\Psi = K\Phi$. Then $\Psi =w(f,g)\in H(B) + H(B')$ and
$P_B \Psi = w(\chi_B f, \chi_B g)$, with $\chi_B$ the characteristic function of $B$.  
\end{proposition}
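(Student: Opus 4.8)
The plan is to compute $\Psi=K_m\Phi$ explicitly on the Cauchy data, check that the resulting data lie in $H_m(B)+H_m(B')$ (which, by the duality \eqref{duality}, is exactly the domain $D(P_m)$ of the cutting projection), and then read off $P_m\Psi$ from Proposition~\ref{PmM}. Writing $\Phi=w_m(f_0,g_0)$ with $f_0,g_0\in\S$ and applying the matrix form of $K_m^B$, I obtain $\Psi=w_m(f,g)$ with
\[
f=\tfrac12(1-r^2)g_0,\qquad g=\tfrac12(1-r^2)(\nabla^2-m^2)f_0-r\partial_r f_0-Df_0-\tfrac12 m^2 G_m^B f_0.
\]
Once the membership $\Psi\in D(P_m)$ is established, Proposition~\ref{PmM} gives $P_m=\mathrm{diag}(P_+,P_-)$ with $P_\pm$ the multiplication by $\chi_B$ on $H_m^{\pm1/2}$, whence $P_m\Psi=w_m(\chi_B f,\chi_B g)$, which is precisely the assertion. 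So the whole content is the domain statement $\Psi\in H_m(B)+H_m(B')$.

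The decisive point is the first component. The multiplier $M_m=\tfrac12(1-r^2)$ vanishes \emph{linearly} on the boundary sphere $\{r=1\}$, hence $f=M_m g_0$ is a smooth, compactly supported function vanishing linearly across that sphere. Cutting $f$ by $\chi_B$ therefore creates no jump, and the Hardy-type condition characterising $H_m^{1/2}(B)$ (the closure of $C_0^\infty(B)$) is met, so $\chi_B f\in H_m^{1/2}(B)$ and, symmetrically, $(1-\chi_B)f\in H_m^{1/2}(B')$; in particular $f\in H_m^{1/2}(B)+H_m^{1/2}(B')$. This is the only place where the explicit shape of $M_m$, and not merely the compactness of its support, is used: a generic wave $\Phi$ does \emph{not} lie in $D(P_m)$, and $K_m$ maps it there precisely because $M_m$ annihilates the boundary trace.

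For the second component the regularity is subcritical. The local part $\tfrac12(1-r^2)(\nabla^2-m^2)f_0-r\partial_r f_0-Df_0$ is again smooth with compact support, and since $H_m^{-1/2}$ sits below the critical exponent, multiplication by $\chi_B$ is harmless: the two cut pieces lie in $H_m^{-1/2}(B)$ and $H_m^{-1/2}(B')$ respectively. The nonlocal term $-\tfrac12 m^2 G_m^B f_0=-\tfrac12 m^2 E_m^-\mu_m^{-2}f_0$ lies in $H_m^{-1/2}(B)$ by the very definition \eqref{GBm} of $G_m^B$, the projection $E_m^-$ landing there. Hence $g\in H_m^{-1/2}(B)+H_m^{-1/2}(B')$, and together with the previous paragraph this yields $\Psi\in H_m(B)+H_m(B')=D(P_m)$, completing the argument through Proposition~\ref{PmM}.

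I expect the genuine obstacle to be exactly this membership $\Psi\in D(P_m)$ for the top component: one must see that applying $K_m$ repairs the boundary behaviour that obstructs $\Phi$ itself from lying in $D(P_m)$. This rests on the linear vanishing of $M_m$ on the sphere $r=1$ together with the borderline Sobolev facts at regularity $\pm1/2$ already packaged in Proposition~\ref{PmM}; everything else is a direct computation on the Cauchy data.
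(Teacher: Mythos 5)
There is a genuine gap, and it sits exactly where you locate the ``whole content'' of the proposition. The explicit formula you write for the Cauchy data of $\Psi$,
\[
f=\tfrac12(1-r^2)g_0,\qquad g=\tfrac12(1-r^2)(\nabla^2-m^2)f_0-r\partial_r f_0-Df_0-\tfrac12 m^2 G_m^B f_0,
\]
is the action of $K_m^B=E_m\tilde K_m|_{H_m(B)}$, which is only defined (and only takes this form) when $f_0,g_0$ are supported in $B$. The proposition, however, is about an arbitrary $\Phi\in\T_m$, whose data need not be supported in $B$, and for such $\Phi$ the operator $K_m$ is \emph{not} given by any local matrix of differential/multiplication operators: by definition $K_m$ is the complex-linear extension of $K_m^B$, i.e.\ $K_m(\Phi_1+\imath_m\Phi_2)=K_m^B\Phi_1+\imath_m K_m^B\Phi_2$ with $\Phi_1,\Phi_2\in D(K_m^B)\subset H_m(B)$, and the complex structure $\imath_m$ involves the nonlocal operators $\mu_m^{\pm1}$. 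Moreover a general $\Phi\in\T_m$ does not lie in $H_m(B)+H_m(B')$ (that is precisely why the statement is nontrivial), so it cannot be reached by cutting, and even the membership $\Phi\in D(K_m)$ is a point that your argument takes for granted rather than establishes. In the one case where your formula is valid, namely $\Phi\in\T_m(B)$, the conclusion is essentially vacuous since $\Psi\in H_m(B)\subset D(P_m)$ by construction (note also that in that case $M_m g_0$ is already supported in $B$, so the ``linear vanishing at $r=1$'' plays no role). Your boundary-regularity discussion of $\chi_B\cdot\tfrac12(1-r^2)g_0$ at Sobolev exponent $1/2$ is a reasonable observation, but it is analyzing a formula that does not describe $K_m\Phi$ for the $\Phi$ in question.

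The paper's proof is abstract and avoids any explicit formula for $K_m$ off $H_m(B)$: since $\Phi\in D(\log\Delta_B)=D(K_m)$, and since by \eqref{fP} the cutting projection is $P_m=a(\Delta_B)+b(\Delta_B)J_B$ with $a(\l)=(1-\l)^{-1}$, $b(\l)=\l^{1/2}(1-\l)^{-1}$, the simple pole of $a$ and $b$ at $\l=1$ is cancelled by the zero of $\log\l$ there, so $\log\Delta_B$ maps its domain into $D(P_m)=H_m(B)+H_m(B')$ (cf.\ the discussion of $\D_0$ in Sect.~\ref{Eqf}). Hence $\Psi=K_m\Phi\in D(P_m)$, and Prop.~\ref{PmM} then identifies $P_m\Psi$ as the cut wave. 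If you want to repair your argument you would need either this spectral-theoretic step, or an independent proof that $\T_m\subset D(K_m)$ together with a characterization of $K_m$ on all of $\H_m$ --- which is essentially Theorem~\ref{MH0} itself, not something available at this stage by direct computation.
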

\begin{proof}
As $\Phi$ belongs to  $D(\log \Delta_B) = D(K)$, it follows that $K \Phi\in D(P_B)$, thus $P_B \Psi = w(\chi_B f, \chi_B g)$ by Prop. \ref{PmM}. 
\end{proof}
We can now compute the local entropy of a wave packet $\Phi$. 
\begin{proposition}\label{entr1}
Let $\Phi\in \T$.  The entropy $S_{\Phi}$ of  $\Phi$ with respect to   $H_0(B)$ is given by 
\ben\label{entr1f}
S_{\Phi} = \pi\int_{B} \big(\Psi\Phi'    - \Phi \Psi' \big)d\bx \  ,
\een
with $\Psi = K \Phi$ and the prime denotes the time derivative. 
\end{proposition}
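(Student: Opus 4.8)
The plan is to unwind the definition of the entropy of a vector and reduce it to a symplectic pairing of Cauchy data restricted to $B$. By \eqref{Se}, the entropy of $\Phi$ with respect to the factorial standard subspace $H_m(B)$ is $S_\Phi=\Im(\Phi,P_m\,\imath_m\log\Delta_B\,\Phi)$, where $\imath_m$ is the complex structure on $\H_m$ and $P_m$ is the cutting projection of $H_m(B)$. The guiding idea is that, by our description of the modular Hamiltonian, $\imath_m\log\Delta_B$ is a multiple of $K_m$, so that $\imath_m\log\Delta_B\,\Phi$ is proportional to $\Psi=K_m\Phi$; the cutting projection then merely restricts the Cauchy data of $\Psi$ to $B$, and the remaining task is to evaluate the imaginary part of the scalar product, i.e.\ the symplectic form $\b$, on those data.

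I would carry this out in four steps. First, since $\Phi\in\T_m$ has smooth, compactly supported Cauchy data, it lies in $D(K_m)=D(\log\Delta_B)$, hence in the form domain $\D$ of Section~\ref{Eqf}, so that $S_\Phi<\infty$ and the quadratic‑form expression \eqref{Se} applies literally. Second, invoke Theorem \ref{MH0} (equivalently $\frac{\imath_m}{2\pi}\log\Delta_B=K_m$ on the relevant domain) to rewrite $\imath_m\log\Delta_B\,\Phi=2\pi\Psi$ with $\Psi=K_m\Phi$. Third, apply Proposition \ref{PH}: because $\Phi\in D(K_m)$, the vector $\Psi=w_m(f,g)$ lies in $H_m(B)+H_m(B')=D(P_m)$ (using the duality \eqref{duality}), and the cutting projection acts locally by $P_m\Psi=w_m(\chi_B f,\chi_B g)$. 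Fourth, using that $\Im(\cdot,\cdot)$ on $\H_m$ is exactly the symplectic form $\b$ together with its Cauchy‑data expression \eqref{beta2}, I would compute
\[
\Im(\Phi,P_m\Psi)=\b(\Phi,P_m\Psi)=\frac12\int_{B}\big(\Psi\Phi'-\Phi\Psi'\big)d\bx,
\]
the factor $\chi_B$ collapsing the full time‑zero integral to an integral over $B$. Collecting the constant $2\pi$ then yields \eqref{entr1f}.

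The only non‑routine inputs are the two facts I am borrowing: that $\imath_m\log\Delta_B$ coincides with $2\pi K_m$ (Theorem \ref{MH0}), and that the cutting projection of a wave in $H_m(B)+H_m(B')$ acts by multiplying the Cauchy data by $\chi_B$ (Proposition \ref{PH}, which rests on the matrix form of $P_m$ in Proposition \ref{PmM}). The point requiring care is the domain/finiteness bookkeeping: one must verify that $\Phi\in\D$, and that $\imath_m\log\Delta_B\,\Phi$ really lands in $H_m(B)+H_m(B')$ so that $P_m$ may be applied before taking the imaginary part — both of which follow once $\Phi\in\T_m\subset D(K_m)$ is in hand. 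Beyond that, the argument is an elementary substitution into \eqref{beta2}, the sole subtlety being the accurate tracking of the multiplicative constant $2\pi$ and of the sign inherited from the identification $\imath_m\log\Delta_B=2\pi K_m$.
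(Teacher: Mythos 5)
Your argument is correct and is essentially the paper's own proof: the paper just invokes Proposition \ref{PH} and the analogy with the wedge computation of \cite{CLR19}, which is exactly the chain you spell out — definition \eqref{Se}, the identification of $\imath_m\log\Delta_B$ with a multiple of $K_m$ from Theorem \ref{MH0}, the local action of the cutting projection from Propositions \ref{PmM}--\ref{PH}, and evaluation of $\Im(\cdot,\cdot)=\b$ on Cauchy data via \eqref{beta2}. The one point to recheck is your identification $\imath_m\log\Delta_B=2\pi K_m$: with the paper's stated conventions ($\log\Delta_{B}=-2\pi A_m$ and $A_m=-\imath_m K_m$, i.e.\ $K_m=-\frac{\imath_m}{2\pi}\log\Delta_B$) one gets $\imath_m\log\Delta_B=-2\pi K_m$, so make sure your sign bookkeeping is consistent with \eqref{entr1f}.
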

\begin{proof}
The proof now follows from Proposition \ref{PH} in analogy with the proof given in \cite{CLR19} for the wedge region case. 
\end{proof}
\begin{theorem}
The entropy $S_\Phi$ of the wave $\Phi\in \T$ in the unit ball $B$, i.e. with respect to $H(B)$, is given by
\ben\label{Sform}
S_{\Phi} = 2\pi\int_{B} \frac{1 - r^2}{2} \langle T_{00}\rangle_{\Phi}\, d\bx 
+  \pi D\int_{B} \Phi^2 d\bx \, .
\een
In particular, $S_\Phi$ is finite for all $\Phi\in\T$. 
\end{theorem}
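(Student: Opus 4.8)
The plan is to reduce the entropy to the diagonal quadratic form of the modular Hamiltonian that has already been evaluated, and then merely carry along the overall constant. By Proposition~\ref{entr1} the starting point is
\[
S_\Phi = \pi\int_B\big(\Psi\Phi' - \Phi\Psi'\big)\,d\bx,\qquad \Psi = K_m\Phi,
\]
and the aim is to identify the right-hand side with $2\pi(\Phi,A_m\Phi)$, since $(\Phi,A_m\Phi) = \b(\Phi,\tilde K_m\Phi)$ by \eqref{AK} and this form was computed explicitly in Theorem~\ref{KT2}.

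First I would rewrite the $B$-integral as a symplectic pairing. The integrand involves only the time-zero Cauchy data of $\Psi$ restricted to $B$, and by Proposition~\ref{PH} the cutting projection acts by $P_m\Psi = w_m(\chi_B f,\chi_B g)$; hence $\int_B(\Psi\Phi'-\Phi\Psi')\,d\bx = 2\b(\Phi,P_m K_m\Phi)$. For $\Phi\in\T_m(B)$ this is transparent, because $K_m\Phi = K_m^B\Phi\in H_m(B)$ by Proposition~\ref{rK}, so the data of $\Psi$ already lies in $\overline B$ and the cut is the identity. For a general wave I would push the cut onto $\Phi$: using the commutation $P_m i\log\Delta_B = i\log\Delta_B\,P_m$ of \eqref{PP**} (equivalently $P_m K_m = K_m P_m$) together with the symplectic-complement orthogonality $\b\big(H_m(B'),H_m(B)\big)=0$ coming from $H_m(B)'=H_m(B')$ in \eqref{duality}, the component of $\Phi$ supported outside $B$ drops out and one is left with
\[
S_\Phi = 2\pi\,\b\big(P_m\Phi, K_m P_m\Phi\big) = 2\pi\,(P_m\Phi, A_m P_m\Phi),
\]
where $P_m\Phi$ is the cut of $\Phi$ to $B$, carrying the same Cauchy data as $\Phi$ on $B$.

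The final step is to insert Theorem~\ref{KT2} evaluated on $P_m\Phi$ and distribute the constant $2\pi$. This turns $\tfrac12\int_B(1-r^2)\langle T^{(m)}_{00}\rangle_\Phi$, $\tfrac D2\int_B\Phi^2$ and $\tfrac12 m^2\!\int_{B\times B}G_m\Phi\Phi$ into the three claimed terms $2\pi\int_B\frac{1-r^2}{2}\langle T^{(m)}_{00}\rangle_\Phi$, $\pi D\int_B\Phi^2$ and $\pi m^2\!\int_{B\times B}G_m\Phi\Phi$. The integrations by parts behind Theorem~\ref{KT2} (Identity~\eqref{r12}) produce no boundary contribution because $1-r^2$ vanishes on $\partial B$, which is precisely what renders the passage from $\Phi$ to the non-smooth cut data $\chi_B f,\chi_B g$ harmless. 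Finiteness of $S_\Phi$ is then immediate from the explicit formula: the first two integrands are bounded functions on the bounded set $B$, and the Yukawa double integral converges since $G_m\in L^1_{\mathrm{loc}}$; equivalently $\T_m\subset\D$ in the sense of Section~\ref{Eqf}.

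The main obstacle is the middle paragraph, namely the rigorous localization to $P_m\Phi$ for a general $\Phi\in\T_m$. The commutation \eqref{PP**} holds only on an intersection of domains, and the cut datum $\chi_B f_0$ is no longer smooth (it lies in the closure $H^{1/2}_m(B)$ but typically not in $C_0^\infty(B)$), so Theorem~\ref{KT2}, proved for smooth data supported in $B$, must be extended to $P_m\Phi$ by a continuity argument controlled by the graph norm of $A_m$. Once this localization is secured, everything else is bookkeeping of constants.
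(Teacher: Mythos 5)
There is a genuine gap in your middle paragraph, and it is more serious than the ``smoothness'' issue you flag at the end. For a general $\Phi = w_m(f_0,g_0)\in\T_m$ whose datum $f_0$ does not vanish on $\partial B$, the object $P_m\Phi$ is simply not defined: $D(P_m)=H_m(B)+H_m(B')$, and any decomposition $\Phi=h+h'$ would force the first component of $h$ to be $\chi_B f_0$, which has a jump across $\partial B$ and therefore does \emph{not} belong to $H_m^{1/2}$ ($s=1/2$ is exactly the borderline where characteristic functions fail). So your assertion that $\chi_B f_0$ ``lies in the closure $H^{1/2}_m(B)$ but typically not in $C_0^\infty(B)$'' is false, and consequently the identity $S_\Phi=2\pi\,\b(P_m\Phi,K_mP_m\Phi)$ has no meaning: $P_m\Phi$ is not a vector of $\H_m$, let alone an element of $D(A_m)$ to which Theorem \ref{KT2} could be extended ``by continuity in the graph norm of $A_m$.'' Note the asymmetry you are implicitly exploiting but which does not hold: Proposition \ref{PH} applies $P_m$ to $\Psi=K_m\Phi$, whose \emph{first} Cauchy datum is $\tfrac12(1-r^2)g_0$ and hence vanishes on $\partial B$, so its cut stays in $H_m^{1/2}$; the same is not true of $\Phi$ itself. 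Likewise, invoking \eqref{PP**} and $\b\big(H_m(B'),H_m(B)\big)=0$ to ``drop the outside component of $\Phi$'' presupposes precisely the decomposition $\Phi\in H_m(B)+H_m(B')$ that fails.

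The paper circumvents this by never forming $P_m\Phi$: it works at the level of quadratic forms. One defines $q_B(\Phi,\Psi)$ to be the right-hand side of \eqref{Sform} (polarized), observes that $q_B=\Re(\cdot\,,P_m^*A_mP_m\,\cdot)$ is positive and associated with a Hermitian operator, hence closable; the entropy form $S$ is closed on $\D$ by Proposition \ref{entq}; the two forms agree on $\T_m(B)$ by Theorem \ref{KT2} and then on $D(\log\Delta_B)$ via Proposition \ref{entr1} (this much of your argument is fine and coincides with the paper); and since $D(\log\Delta_B)$ is a form core for $S$, the equality $S=q_B$ propagates to all of $\T_m$ by form closure, which also yields $\T_m\subset\D$ and hence finiteness. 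Your proof needs to be restructured along these lines: the passage from $D(\log\Delta_B)$ to general $\Phi\in\T_m$ must go through closability of the forms, not through the ill-defined cut vector $P_m\Phi$.
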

\begin{proof}
With $\Phi,\Psi \in \T$, define the real linear quadratic form 
\[
q_B(\Phi, \Psi) = 2\pi\int_{B} \frac{1 - r^2}{2} \langle T_{00}\rangle_{\Phi,\Psi}\, d\bx 
+  \pi D\int_{B} \Phi\Psi d\bx 
\, .
\] 
Clearly, $q_B(\Phi, \Phi) < \infty$ if $\Phi\in\T$. We want to show that
\begin{equation}\label{Sq}
S(\Phi,\Phi) =  q_B(\Phi, \Phi)\, ,\quad \Phi\in \T\, ,
\end{equation}
where $S$ is the entropy form with respect to $H(B)$.
As $S$ is a closed by Prop. \ref{entq}, it suffices to show that $q_B$ is closable on $\T$ and that \eqref{Sq} holds on a form core for $S$. 

Now, $q_B$ is given by 
\[
q_B(\Phi, \Psi) = \Re (\Phi, P^*_B A P_B  \Psi)\, , \quad \Phi, \Psi \in \T\, ,
\]
similarly as in \cite[Thm. 3.5]{CLR19}. The real linear operator $P^*_0 A P_B$ is Hermitian, thus closable. As $q_B$ is positive, $q_B$ is closable, cf. the proof of  \cite[Thm. 1.27]{Ka}. 

On the other hand, \eqref{Sq} holds if $\Phi\in\T(B)$ by Theorem \ref{ModR}. Then it holds if $\Phi\in D(\log\Delta_B)$ by the same argument, using Prop. \ref{entr1}.  As $D(\log\Delta_B)$ is a form core for $S$  (see Sect. \ref{Eqf}), we conclude that $S = q_B$ on $\T$. 
\end{proof}
Denote by $B_R(\bar\bx)$ the radius $R$ space ball around the point $\bar\bx\in\mathbb R^d$. 
\begin{corollary}\label{SR}
The entropy $S_\Phi(R) = S_\Phi(R, t,\bar\bx)$ of the wave packet $\Phi\in\T$ in the space region $B_R(\bar\bx)$ at time $t$ is given by
\[
S_\Phi(R) = \pi\int_{B_R(\bar\bx)} \frac{R^2 - r^2}{R} \langle T_{00}\rangle_{\Phi}\, d\bx 
+  \frac{\pi D}{R}\int_{B_R(\bar\bx)} \Phi^2 d\bx 
\, ,
\]
with $r = |\bx - \bar\bx|$, ($x_0 = t$ integral). 
\end{corollary}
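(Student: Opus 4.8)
The plan is to obtain the formula for an arbitrary ball $B_R(\bar\bx)$, at an arbitrary time $t$, from the unit-ball formula \eqref{Sform} by transporting it with the spacetime symmetries that map one region onto the other, in exact parallel with the dilation argument used for Theorem \ref{ModR}. The principle I would isolate first is the covariance of the entropy: if $U:\H_m\to\H_{m'}$ is a unitary with $U\imath_m=\imath_{m'}U$ and $UH_m(Z)=H_{m'}(Z')$ for two standard subspaces, then $\Delta_{H_{m'}(Z')}=U\Delta_{H_m(Z)}U^{-1}$ and $J_{H_{m'}(Z')}=UJ_{H_m(Z)}U^{-1}$, so that by \eqref{fP} the cutting projection obeys $P_{H_{m'}(Z')}=UP_{H_m(Z)}U^{-1}$; since $U$ preserves the imaginary part of the scalar product, the definition \eqref{Se} gives at once $S^{H_{m'}(Z')}_{U\Phi}=S^{H_m(Z)}_{\Phi}$.

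The symmetries I need are the spatial translation $\sigma_{\bar\bx}$ and the time translation $\tau_t$, which act as Poincar\'e unitaries on $\H_m$ commuting with $\imath_m$ and carry the time-zero, origin-centred standard subspaces to the translated ones, together with the dilation $\delta_R:\H_m\to\H_{Rm}$ of \eqref{SD}, which intertwines $\imath_m$ with $\imath_{Rm}$ and, as recorded in the discussion before Theorem \ref{ModR}, satisfies $\delta_R H_m(B_R)=H_{Rm}(B)$. The double cone based on $B_R(\bar\bx)$ at time $t$ has standard subspace $\tau_t\sigma_{\bar\bx}H_m(B_R)$, so with $U=\delta_R\sigma_{-\bar\bx}\tau_{-t}$ I get a unitary $\H_m\to\H_{Rm}$ sending it to $H_{Rm}(B)$. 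Setting $\Phi_2=U\Phi\in\T_{Rm}$, whose time-zero Cauchy data are $\Phi_2(0,\bx)=R^D\Phi(t,R\bx+\bar\bx)$ and $\partial_0\Phi_2(0,\bx)=R^{D+1}(\partial_0\Phi)(t,R\bx+\bar\bx)$, the covariance principle yields $S_\Phi(R,t,\bar\bx)=S^{H_{Rm}(B)}_{\Phi_2}$, to which I apply the unit-ball formula \eqref{Sform} at mass $Rm$.

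It then remains to change variables $\bx\mapsto\bx'=R\bx+\bar\bx$ in the three resulting integrals, with $r=|\bx'-\bar\bx|/R$, $d\bx=R^{-d}d\bx'$, and $1-r^2=(R^2-|\bx'-\bar\bx|^2)/R^2$. Under the dilation the densities scale as $\langle T^{(Rm)}_{00}\rangle_{\Phi_2}(\bx)=R^{2D+2}\langle T^{(m)}_{00}\rangle_\Phi(t,\bx')$ and $\Phi_2^2(\bx)=R^{2D}\Phi^2(t,\bx')$, while the Helmholtz Green function is homogeneous under joint rescaling of mass and position, $G_{\lambda m}(\bx)=\lambda^{d-2}G_m(\lambda\bx)$, directly from \eqref{Gm}. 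Feeding these together with the Jacobians into \eqref{Sform} at mass $Rm$, each of the three terms collects the common power $R^{2D-d}$, which by the dimensional identity $2D=d-1$ equals $R^{-1}$; collecting the numerical constants then reproduces precisely the stress-energy, normalisation and Yukawa integrands of the Corollary over $B_R(\bar\bx)$ at time $t$.

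I expect the only genuinely substantive point, as opposed to bookkeeping, to be the dilation covariance: because $\delta_R$ shifts the mass $m\mapsto Rm$, the unit-ball formula can be invoked only at mass $Rm$, and one must check that the separate weights $R^{2D+2}$ (stress-energy), $R^{2D}$ (normalisation) and $R^{d-2}$ (Green kernel) all collapse, through $2D=d-1$, to the single prefactor $R^{-1}$. Once the transformation behaviour of $\Delta_H$ and $P_H$ is in hand, everything else is a change of variables, so the argument is indeed ``immediate by the above discussion.''
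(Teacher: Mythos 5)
Your proposal is correct and follows essentially the same route as the paper: both rest on the dilation covariance $\delta_R:\H_m\to\H_{Rm}$, $\delta_R H_m(B_R)=H_{Rm}(B)$ set up before Theorem \ref{ModR} together with translation covariance, the paper assembling the result from the rescaled modular Hamiltonian of Theorem \ref{ModR} plus the cutting projection, while you transport the already-assembled unit-ball entropy formula \eqref{Sform} directly. Your scaling bookkeeping checks out, including the Green-function homogeneity $G_{\l m}(\bx)=\l^{d-2}G_m(\l\bx)$, which follows from \eqref{Gm} (the exponent $\l^{((d-1)/2)^2}$ printed in the paper at that point appears to be a typo).
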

\begin{proof}
In view of Theorem \ref{ModR} and the formula for the cutting projection in Proposition \ref{PH}, the theorem follows, in the time zero case, by the entropy formula \eqref{Se}. By translation covariance, we get the formula at an arbitrary time. 
\end{proof}
\begin{figure}
\centering
\begin{tikzpicture}[xscale=0.9]
\draw [help lines, ->] (0,0) -- (11,0);
\draw [help lines, ->] (0,0) -- (0,4.3);
\node at (1.6,2.7) {$S$};
\node[blue] at (0.6,1.6) {$\Phi$};
\node[red] at (1.1,0.6) {$\Phi$};
\node at (11,-0.3) {$r$};
\node at (0.4,3.3) {$\frac1r$};
\draw[fill] (1,0) circle [radius=0.7pt]; 
\draw[fill] (0,0) circle [radius=0.7pt]; 
\node at (0,-0.3) {$0$};
\node at (1,-0.3) {$1$};
\draw [blue,smooth,very thick, domain= 0.01:3*pi] plot (\x, {2*(sin(\x r)* (1/\x)});
\draw [red,smooth,  very thick,domain= 1:3*pi + 1] plot (\x, {2*(sin((\x - 1)r)* (1/\x)});
\draw [blue,smooth,thick, domain= 0.95:3*pi] 
plot (\x, {4*pow(sin(\x r)* (1/\x),2)  
+   2*pow({(cos(\x r)*\x - sin(\x))* (1/(\x*\x)},2)
+ 2*pow(cos(\x r)*(1/\x),2)});
\draw [dashed,red,smooth, ultra thick,domain= 1:3*pi + 1] 
plot (\x, {4*pow{(sin((\x-1) r)* (1/\x)},2)  
+   2*pow({(cos((\x-1) r)*\x - sin(\x - 1))* (1/(\x*\x)},2)
+ 2*pow(cos((\x - 1) r)*(1/\x),2)
});
\draw[dotted,smooth, very  thick, domain = 0.46:11] plot(\x, 2/\x);
\end{tikzpicture}
\caption{The spherical massless wave $\Phi(r) = \frac{\sin (r-t)}{r}$, $d=3$, at time $t=0$ and $t=1$ ($\Phi$ is not everywhere defined).
In this example, the entropy densities per area $S(r)$ are close at different times.}
\end{figure}
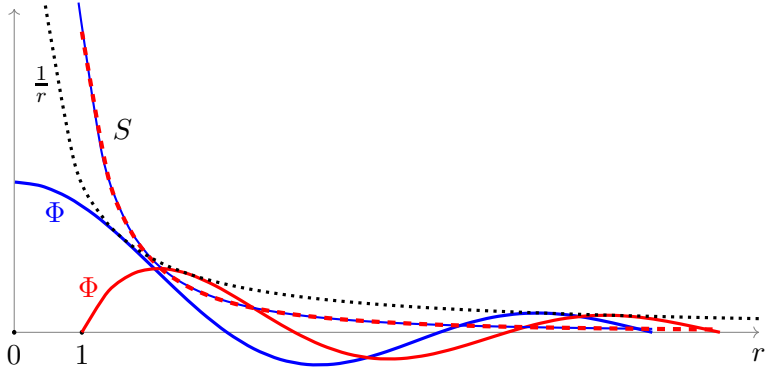
For large $R$, $S_\Phi(R)/R$ gets proportional to the total local energy 
$E= \int_{B_R(\bar\bx)} \langle T_{00}\rangle_{\Phi}(t,\bx) d\bx$; that is, since the restriction of $\Phi$ to the $x_0  = t$ hyperplane is compactly supported, we have
\[
 \frac{S_\Phi(R)}{R} \sim \pi E \, ,
\]
as $R\to\infty$ like in \cite[(41)]{L18}, see also \cite{LXbek}. This is in agreement with the Bekenstein bound
\[
S_\Phi(R) \leq \pi E R\, ,
\]
see \cite{LXbek} and references therein. 

On the other hand, at fixed time $t$, the local entropy $S_\Phi(R,\bx) = S_\Phi(R,t, \bx)$ has an expansion as $ R\to 0$ given by
\begin{equation}\label{asymp}
S_\Phi(R,\bx) =  \pi   D\Phi^2(t,\bx) V_d\, R^{d-1} + ...  \ =  \pi\frac{D}{d} A_{d-1}(R) \Phi^2(t,\bx) + ...\ ;
\end{equation}
here $ A_{d-1}(R)  = 2\frac{\pi^{d/2}}{\Gamma(d/2)}R^{d-1}$ is the area of the $(d-1)$-dimensional sphere $\partial B_R$ and  $V_d = A_{d-1}(1)/d$ is the volume of $B$. 
So the entropy density of the wave packet $\Phi$ around a point gets proportional to the area of the sphere boundary of $B_R$, as expected by  holographic area theorems for the entropy, in a black hole and other contexts, see \cite{Bek}. 

Note that the ratio entropy-density/area $S_\Phi(R,x)/A_{d-1}(R)$ in \eqref{asymp} is  proportional to the height $\Phi^2(t,\bx)$ of the wave packet at the point $x =(t,\bx)$, that may be interpreted as a (non normalised) probability density according to the Born rule. 

\subsection{Quantum Field Theory}
By the analysis in \cite{L19,CLR19}, we have an immediate corollary in Quantum Field Theory concerning the local vacuum relative entropy of a coherent state. 
\begin{corollary}
Let $\A(O_R)$ be the von Neumann algebra associated with the double cone $O_R$ (the causal envelope of $B_R$) by the free, neutral, massless quantum field theory. The relative entropy $S(\f_\Phi |\!| \f)$ (see \cite{Ar76}) between the vacuum state $\f$ and the coherent state $\f_\Phi$ associated with the one-particle wave $\Phi\in\H$ is given by $S_\Phi(R)$ by Corollary \ref{SR} (with $B_R$ centred at the origin).
\end{corollary}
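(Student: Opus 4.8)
The plan is to reduce the statement to the identity, established in \cite{L19,CLR19}, between the Araki relative entropy of a coherent state and the one-particle entropy of the corresponding wave in the associated standard subspace. First I would set up the second quantisation. The free neutral field of mass $m$ acts on the Bose Fock space over $\H_m$, with vacuum vector $\Omega$ inducing the vacuum state $\f$, and $\A_m(O_R)$ is the von Neumann algebra generated by the Weyl unitaries $W(h)$, $h\in H_m(B_R)$, where $H_m(B_R)$ is the factorial standard subspace of $\H_m$ associated with $O_R$. The coherent state is $\f_\Phi = \f\big(W(\Phi)^*\,\cdot\,W(\Phi)\big)$, i.e. the state of $\A_m(O_R)$ induced by the vector $W(\Phi)\Omega$. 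By Reeh--Schlieder and by Lemma \ref{ct}, $\Omega$ is cyclic and separating for $\A_m(O_R)$ and the latter is a factor of type $III_1$, so that $S(\f_\Phi \| \f)$ is well defined.

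Next I would invoke the key input. Since the vacuum modular operator of $\A_m(O_R)$ is the second quantisation of the one-particle modular operator $\Delta_{R,m}$ of $H_m(B_R)$, the relative entropy of the coherent state coincides with the one-particle entropy of $\Phi$,
\[
S(\f_\Phi \| \f) = \Im\big(\Phi,\, P_{H_m(B_R)}\, i\log\Delta_{R,m}\,\Phi\big) = S_\Phi\,,
\]
which is exactly the content of \cite{L19,CLR19} applied to the standard subspace $H_m(B_R)$; here $P_{H_m(B_R)}$ is the cutting projection and $S_\Phi$ is the one-particle entropy \eqref{Se} relative to $H_m(B_R)$. The point is that the whole modular apparatus on Fock space factors through the one-particle data, which is what lets the relative entropy collapse to the quadratic expression in $\Phi$.

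Finally I would identify $S_\Phi$ with the explicit expression $S_\Phi(R)$: by Corollary \ref{SR}, taking $\bar\bx = 0$ and $t=0$ (the general time then following by translation covariance), $S_\Phi$ equals the sum of the parabolically weighted stress--energy term, the mass-independent normalisation, and the Yukawa term, that is precisely $S_\Phi(R)$. The main, and essentially only, obstacle is bookkeeping rather than analysis: one must check that the sign convention for $\log\Delta$, the normalisation of the cutting projection, and the definition of the Weyl/coherent operators used in \cite{L19,CLR19} match those adopted here, so that the cited relative-entropy formula transfers verbatim and the positivity $S(\f_\Phi \| \f)\ge 0$ is consistent with $S_\Phi\ge 0$. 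No genuinely new difficulty beyond this verification is expected.
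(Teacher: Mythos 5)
Your proposal is correct and follows exactly the paper's argument: both reduce the claim to the identity from \cite{CLR19} expressing $S(\f_\Phi\|\f)$ as the one-particle entropy of $\Phi$ with respect to $H_m(O_R)$, and then invoke Corollary \ref{SR} for the explicit formula. The extra setup you include (Weyl operators, factoriality, convention checks) is harmless elaboration of the same route.
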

\begin{proof}
As shown in \cite{CLR19}, $S(\f_\Phi |\!| \f)$ is equal to the entropy of the vector $\Phi\in \H$ with respect to the standard subspace $H(O_R)$. So Corollary \ref{SR} applies. 
\end{proof}

\section{Appendix. Elementary relations}
For the reader's convenience, we collect a couple of elementary identities that are used in the text. 
Note first that $\partial_r$ is the partial derivative in the direction 
$\vec{r} = (\frac{x_1}{r},\dots \frac{x_d}r)$, thus
\ben\label{dr}
r\partial_r f = {\bm x}\cdot{ \nabla}f \, ,
\een
for any $f \in \cS$.
\begin{identity}\label{r12}
Let $f\in\S$. We have
\[
\int_{\mathbb R^d}\frac12(1-r^2)|\nabla f |^2  d\bx  =
-\int_{\mathbb R^d}\frac12(1-r^2)f \nabla^2 f   d\bx  +
  \int_{\mathbb R^d} rf\partial_r f\, d\bx
  \, .
\]
\end{identity}
\begin{proof}
The identity follows immediately by the following two relations:
\ben\label{r1}
-\int_{\mathbb R^d}\frac12(1-r^2)f \nabla^2 f   d\bx  = 
  \int_{\mathbb R^d}\frac12(1-r^2)|\nabla f |^2  d\bx 
 +\frac{d}2\int_{\mathbb R^d} f ^2 d\bx\ ,
\een
\ben\label{r2}
\int_{\mathbb R^d} rf\partial_r f\, d\bx  = - \frac{d}2\int_{\mathbb R^d}  f^2  d\bx \ .
\een
Concerning the second relation, by \eqref{dr} we have
\begin{multline*}
\int_{\mathbb R^d} rf\partial_r f\, d\bx  = 
\sum_k \int_{\mathbb R^d} x_k f\partial_k f\, d\bx  =
\frac12\sum_k \int_{\mathbb R^d} x_k \partial_k (f^2) d\bx \\
= -\frac12\sum_k \int_{\mathbb R^d} f^2 d\bx
= - \frac{d}2\int_{\mathbb R^d}  f^2  d\bx \ .
\end{multline*}
Then the first relation follows by
\begin{align*} 
\int_{\mathbb R^d}(1-r^2)f \nabla^2 f   d\bx & = -\sum_k \int_{\mathbb R^d}\partial_k\big((1-r^2)f \big)\partial_kf \,  d\bx \\ 
& = 2\sum_k \int_{\mathbb R^d}x_kf \partial_k  f   \, d\bx
 -\sum_k \int_{\mathbb R^d}(1-r^2)(\partial_k f )^2  d\bx
\\
&  = -d\int_{\mathbb R^d} f ^2 d\bx
 - \int_{\mathbb R^d}(1-r^2)|\nabla f |^2  d\bx \ .
\end{align*}
\end{proof}
\begin{identity}\label{zg}
With $Z$ and $\g$ as in Section \ref{pre}, we have
\[
Z'(z,s)|_{s=0} = (1-z^2)/2 \, ,
\]
\[
\g'(u,v;0) = -\frac{D}2(u +v) = -D\,x_0\, .
\]
\end{identity}
\begin{proof}
Denoting by a prime the derivative with respect to the $s$-parameter, we have
\[
f(z,0) = 1,\quad f'(z, 0) = -\frac{e^{-s}(1-z)}{2}\big|_{s=0} = -\frac{1-z}{2}\, ,
\]
\[
g(z,0) = z,\quad g'(z, s) = \frac{e^{-s}(1-z)}{2}|_{s=0} = \frac{1-z}{2}\, .
\]
Since $Z = g/f$, we get
\[
Z'(z,s)|_{s=0} = (1-z^2)/2 \, .
\]
We have
\[
\g'(u,v;s)\big |_{s=0} = \big(F(u, s)F(-v, -s)\big)' \big |_{s=0}= F'(u, s)\big |_{s=0} - F'(-v, -s)\big |_{s=0} \ .
\]
Since
\[
F'(z, s)\big|_{s=0} = -d \big(f(z, s)\big)^{-D-1}f'(z, s) \big|_{s=0}
 = \frac{D}2  (1-z) \ ,
\]
we have
\[
\g'(u,v;0) = -\frac{D}2(u +v) = -D\,x_0\, .
\]
\end{proof}
\bigskip

\noindent
{\bf Errata.} 
This paper was originally part of a larger manuscript ``The massive modular Hamiltonian". We are indebted to H.  Bostelmann, D.  Cadamuro and K.  Sander for having pointed out a gap concerning the positive mass case there. Therefore, the massive analysis in \cite[Sections 5.2, 5.3, 5.4]{L21} is to be ignored. 

\bigskip

\noindent
{\bf Acknowledgements.} 
An initial part of this work was done in June 2019 during the
program ``Operator Algebras and Quantum Physics'' 
at the Simons Center for Geometry and Physics at Stony Brook.
We thank D. Buchholz for conversations. 

\smallskip\noindent
We acknowledge the MIUR Excellence Department Project awarded to the Department of Mathematics, University of Rome Tor Vergata, CUP E83C18000100006.

\end{document}